\theoremstyle{thmstyleone}%
\newtheorem{theorem}{Theorem}
\newtheorem{corollary}[theorem]{Corollary}%
\newtheorem{lemma}[theorem]{Lemma}%
\newtheorem{construction}{Construction}
\theoremstyle{thmstyletwo}%
\newtheorem{example}{Example}%
\theoremstyle{thmstylethree}%
\newtheorem{definition}{Definition}%
\DeclarePairedDelimiter\abs{\lvert}{\rvert}
\DeclarePairedDelimiter\ceil{\lceil}{\rceil}
\DeclarePairedDelimiter\floor{\lfloor}{\rfloor}
\DeclarePairedDelimiter\parenv{\lparen}{\rparen}
\DeclarePairedDelimiter\sparenv{\lbrack}{\rbrack}
\DeclarePairedDelimiter\angenv{\langle}{\rangle}
\DeclarePairedDelimiter\set{\{}{\}}
\DeclarePairedDelimiter\mset{\{\!\!\{}{\}\!\!\}}
\renewcommand{\leq}{\leqslant}
\renewcommand{\geq}{\geqslant}
\newcommand{\cA}{\mathcal{A}}
\newcommand{\cB}{\mathcal{B}}
\newcommand{\cH}{\mathcal{H}}
\newcommand{\N}{\mathbb{N}}
\newcommand{\eqdef}{\triangleq}
\DeclareMathOperator{\cost}{\mathsf{cost}}
\newcommand{\ba}{\boldsymbol{a}}
\newcommand{\bi}{\boldsymbol{i}}
\newcommand{\bx}{\boldsymbol{x}}
\newcommand{\umod}{\,\overline{\bmod}\,}
\begin{document}

\title[]{On Zero Skip-Cost Generalized Fractional-Repetition Codes from Covering Designs}


\author*[1]{\fnm{Wenjun} \sur{Yu}}\email{wenjun@post.bgu.ac.il}

\author[2]{\fnm{Bo-Jun} \sur{Yuan}}\email{ybjmath@163.com}

\author[1,3]{\fnm{Moshe} \sur{Schwartz}}\email{schwartz.moshe@mcmaster.ca}

\affil[1]{\orgdiv{School of Electrical and Computer Engineering}, \orgname{Ben-Gurion University of the Negev}, \orgaddress{\city{Beer Sheva}, \postcode{8410501}, \country{Israel}}}

\affil[2]{\orgdiv{School of Science}, \orgname{Zhejiang University of Science and Technology}, \orgaddress{\city{Hangzhou}, \postcode{310023}, \country{China}}}

\affil[3]{\orgdiv{Department of Electrical and Computer Engineering}, \orgname{McMaster University}, \orgaddress{\city{Hamilton}, \postcode{L8S 4K1}, \state{ON}, \country{Canada}}}


\abstract{
We study generalized fractional repetition codes that have zero skip cost, and which are based on covering designs. We show that a zero skip cost is always attainable, perhaps at a price of an expansion factor compared with the optimal size of fractional repetition codes based on Steiner systems. We provide three constructions, as well as show non-constructively, that no expansion is needed for all codes based on sufficiently large covering systems.
}

\keywords{error-correcting codes, fractional repetition codes, covering designs}


\pacs[MSC Classification]{94B25,05B40}

\maketitle

\section{Introduction}

DRESS (Distributed Replication based Exact Simple Storage) codes were introduced by El Rouayheb and Ramchandran~\cite{el2010fractional}, generalizing a construction of Rashmi et al.~\cite{rashmi2009explicit}. These form a class of exact minimum-bandwidth regenerating (MBR) codes for distributed storage systems that offer low-complexity uncoded repair of failed nodes. In this scheme, user information is first encoded using an outer MDS code. The coded information is then further coded using an inner fractional-repetition (FR) code, which distributes copies of the coded information across several nodes.

When a node fails (called, node erasure), a small number of helper nodes are contacted. These nodes transfer copies of parts of the erased data, thereby recovering all of the erased node. The number of helper nodes (called, code locality) and their identity are determined by the FR code.

Since constructions for MDS codes with flexible parameters are known, considerable effort has been devoted to the construction of FR codes. A common strategy for the latter is to base the construction on combinatorial designs (e.g.,~\cite{olmez2016fractional,el2010fractional,zhu2018study,zhu2019duality,olmez2012repairable,zhu2019fractional}). Among these, using Steiner systems is perhaps the most popular. However, since only few constructions of Steiner systems are known, and their parameters are quite restrictive, it was recently suggested to consider other designs~\cite{zhu2014general,zhu2017general} resulting in generalized fractional-repetition (GFR) codes.

When considering the efficiency of recovering from a node erasure, it is not only the number of helper nodes involved (locality) and the amount of data read in each helper node (I/O cost), but also the way it is arranged within the helper node~\cite{wu2021achievable}. In particular, when a helper node reads the required information for an erasure recovery process, the number of elements it needs to skip adversely affects the recovery speed. This motivated Chee et al.~\cite{chee2024repairing} to define the skip cost of the FR code, with zero skip cost being the most desirable. They went on to construct zero skip-cost FR codes based on Steiner quadruple systems. Apart from this, no other non-trivial FR codes with zero skip cost are known.

The main contributions of this paper are as follows: We construct generalized fractional repetition (GFR) codes with zero skip cost based on covering designs, which we call \emph{covering fractional repetition (CFR)} codes. Covering designs are known to exist for all parameters while asymptotically approaching those of Steiner systems, as shown by R\"odl~\cite{rodl1985packing}. The zero skip cost, however, comes at a price of potentially having more nodes than an FR code based on a Steiner system of the same parameters. We call the ratio of obtained nodes to the optimum number of nodes, the expansion factor. We show a trivial construction with asymptotic expansion factor $2$, and then a second construction with asymptotic expansion factor $1$. However, in finite cases, the expansion factor of the second construction might be higher than that of the first. We then provide a third more efficient construction for finite cases, improving upon the first two in many cases. Finally, we show using a probabilistic non-constructive argument, that for all sufficiently-large covering designs, no expansion is required whatsoever in order to attain zero skip cost.

The paper is organized as follows. In Section~\ref{sec:prelim} we give the notation and rigorous definitions used throughout the paper. Our constructions of CFR codes are described in Section~\ref{sec:cfr}. We conclude in Section~\ref{sec:conc} with a short summary and open questions.

\section{Preliminaries}
\label{sec:prelim}

For integers $i<j$, we define $[i,j]\eqdef\set{i,i+1,\dots,j}$, and for $v\in \N$, we define $[v] \eqdef [1,v]$. Given a set $S$, we denote by $\binom{S}{k}$ the set of all $k$-subsets of $S$, namely,
\[ \binom{S}{k} \eqdef \set*{S'\subseteq S ~:~ \abs*{S'}=k}.\]
For any two positive integers, $a$ and $b$, we let $a\umod b$ denote the unique integer $r\in [b]$ such that $r\equiv a \pmod{b}$. Thus,
\[
a = \parenv*{\ceil*{\frac{a}{b}}-1} b + (a \umod b).
\]
We shall also use the notation $\mset{}$ to denote multisets, i.e., sets with possibly repeated elements.


Throughout this paper we work over the alphabet $[v]$, and we consider the distributed storage system based on a $k\times N$ array $\cA\in [v]^{k\times N}$ with form $\cA = (\ba^{(1)},\dots,\ba^{(N)})$, where $\ba^{(i)}\in [v]^k$ is column vector of length $k$. In this framework, each column of $\cA$ represents a node. There are $v$ symbols spread among the $N$ nodes such that the $i$-th node stores $k$ symbols in an ordered sequence $\ba^{(i)}$. 

When some node $j$ fails, the system repairs it by transferring symbols from other nodes. We consider \emph{repair by transfer}, also called \emph{uncoded repair}~\cite{rashmi2009explicit}. The repair process calls upon a set of nodes, called \emph{helper nodes}, $\cH^{(j)}\subseteq [N]\setminus \set{j}$, which depends on $j$. Each node $h\in\cH^{(j)}$ transmits a subset of its entries, $R^{(h)}$, back to node $j$ such that the original entries of $j$ are all received, namely, $\bigcup_{h\in\cH^{(j)}} R^{(j)}$ is exactly the set of entries in $\ba^{(j)}$. Since we would like to minimize the bandwidth required for repairing, we further assume no element is transmitted twice, which means that $R^{(h)}\cap R^{(h')}=\emptyset$, for all $h,h'\in\cH^{(j)}$, $h\neq h'$. Another standard assumption we make is that the repair process has \emph{locality} $\ell$, which means no more than $\ell$ helper nodes participate in the repair, i.e., $\abs{\cH^{(j)}}\leq \ell$.

At this point we would like to consider the \emph{skip cost} of the repair process. The concept was first introduced by~\cite{chee2024repairing} as a measure of the repair-process performance. It is motivated by storage devices that are more efficient reading sequentially (e.g., HDDs and magnetic tapes) compared with random access that requires skipping.

\begin{definition}
Assume the distributed-storage code $\cA$, as defined above, with locality $\ell$. Further assume that during a repair process, node $h\in[N]$ transmits a subset of its contents, $R^{(h)}$, where
\begin{align*}
    \ba^{(h)}&=(a^{(h)}_1,a^{(h)}_2,\dots,a^{(h)}_k), \\
    R^{(h)}&=\set*{a^{(h)}_{i_1},a^{(h)}_{i_2},\dots,a^{(h)}_{i_t}},
\end{align*}
for some $1 \leq i_1 < \ldots < i_t \leq k$. the \emph{skip cost} for node $h$ is defined in this case as
\[
\cost(R^{(h)} \mid \ba^{(h)}) \eqdef \sum_{j=1}^{t-1} (i_{j+1}-i_j-1) = i_t - i_1 - (t-1).
\]
We naturally extend this to the skip cost of repairing node $j$,
\[\cost(\ba^{(j)}) \eqdef \min_{\cH^{(j)}}\sum_{h\in \cH^{(j)}} \cost(R^{(h)} \mid \ba^{(h)}),\]
where the minimization is over all ways of finding a set of helper nodes $\cH^{(j)}$, $\abs{\cH^{(j)}}\leq \ell$, and assigning each $h\in\cH^{(j)}$ a transmission $R^{(h)}$. Finally, the skip cost of the entire array is defined by
\[\cost(\cA) \eqdef \max_{j\in [N]} \cost(\ba^{(j)}).\] 
\end{definition}

\begin{example}
Consider a distributed-storage system based on the following array,
\[
\cA = 
\begin{pmatrix}
       1 & 2 & 1 & 2 & 1 & 1 \\
       2 & 3 & 3 & 4 & 3 & 2 \\
       3 & 4 & 4 & 5 & 5 & 4 \\
       5 & 6 & 5 & 6 & 6 & 6 \\
    \end{pmatrix},
\]
and assume locality $\ell=2$.

If node $1$ fails, there are some repair options. One option is to contact nodes $\cH^{(1)}_1=\set{3,6}$, and have node $3$ transmit $\set{3,5}$ and node $6$ transmit $\set{1,2}$. The skip cost of this option is
\[
\cost(\set{3,5}\mid \ba^{(3)})+ \cost(\set{1,2} \mid \ba^{(6)}) = 1 + 0 = 1.
\]
We can alternatively contact nodes $\cH^{(1)}_2=\set{5,6}$, and have node $5$ transmit $\set{3,5}$ and node $6$ transmit $\set{1,2}$. The skip cost of this option is
\[
\cost(\set{3,5}\mid \ba^{(5)})+ \cost(\set{1,2} \mid \ba^{(6)}) = 0 + 0 = 0.
\]
We naturally have
\[ \cost( \ba^{(1)}) = 0,\]
since the skip cost is always non-negative and we already have evidence of a repair process for node $1$ with skip cost $0$.

We can summarize the lowest skip cost repair process for each of the columns of $\cA$ in the following table:
\[
\begin{array}{ccc}
\toprule
\text{Column} & \text{Repair Process} & \text{Skip Cost} \\
\midrule
1 & \set{1,2}\text{ from } \ba^{(6)}, \set{3,5}\text{ from } \ba^{(5)} & 0 \\
2 & \set{2,3}\text{ from } \ba^{(1)}, \set{4,6}\text{ from } \ba^{(6)} & 0 \\
3 & \set{1,3}\text{ from } \ba^{(5)}, \set{4,5}\text{ from } \ba^{(4)} & 0 \\
4 & \set{2,4}\text{ from } \ba^{(6)}, \set{5,6}\text{ from } \ba^{(5)} & 0 \\
5 & \set{1,3}\text{ from } \ba^{(3)}, \set{5,6}\text{ from } \ba^{(4)} & 0 \\
6 & \set{1,2}\text{ from } \ba^{(1)}, \set{4,6}\text{ from } \ba^{(2)} & 0 \\
\bottomrule
\end{array}
\]
Since all columns may be repaired with a skip cost of $0$, we can now say
\[ \cost(\cA) = 0.\]
\end{example}

To better understand the considerations in constructing $\cA$, we recall the framework of fractional repetition codes. El Rouayheb and Ramchandran~\cite{el2010fractional} introduced the concept of a DRESS (Distributed Replication-based Exact Simple Storage) code, which consists of the concatenation of an outer MDS code and an inner fractional repetition (FR) code. In their setting\footnote{We use different letters than~\cite{el2010fractional} for the code parameters, to better align with the common notation in design theory.}, assume that $k,N,v,\rho$ are positive integers satisfying
\begin{equation}
\label{eq:knvrho}
kN=v\rho.
\end{equation}
An $(N,k,\rho)$-FR code is a collection of $N$ $k$-subsets of $[v]$, $B_1,B_2,\dots,B_N\in\binom{[v]}{k}$, such that each symbol of $[v]$ appears in \emph{exactly} $\rho$ subsets. For convenience, a $k\times N$ matrix $\cA$ is defined, whose $i$-th column, $\ba^{(i)}$, contains the elements of $B_i$ in some arbitrary order.


Several studies have suggested ways of constructing the blocks $B_1,\dots,B_n$ which define the FR code. Most of these revolve around the use block designs, as these naturally contain points and blocks as part of their definition. In some cases the blocks of the system are used as is, while in some the transpose design is used, where the role of points and blocks is reversed. As examples we mention the use of Steiner systems in~\cite{olmez2016fractional,el2010fractional,zhu2018study,chee2024repairing}, $t$-$(v,k,\lambda)$ designs in~\cite{zhu2019duality}, and affine resolvable designs~\cite{olmez2012repairable,olmez2016fractional,zhu2019fractional}.

The popularity of Steiner systems is shadowed by their relative rarity. The strict divisibility conditions required prohibit many parameters, and the few constructions known provide only Steiner systems with small parameters. However, in~\cite{zhu2014general,zhu2017general}, more flexible codes were suggested, called \emph{generalized fractional repetition (GFR) codes}, paving the way to using other designs, such as covering designs, which are the focus of this paper. To describe these codes, we first recall the definition of covering designs.

\begin{definition}
Let $t \leq k \leq v$ be positive integers. A \emph{$(t,k,v)$ covering design} is the pair $(X,\cB)$, where $X=[v]$ (the \emph{points}) and a multiset $\cB=\mset{B_1,\ldots,B_N}\subseteq \binom{X}{k}$ (the \emph{blocks}), such that any $t$-subset of points $T\in \binom{X}{t}$ is contained in at least one block. 
\end{definition}

This definition generalizes the notion of $(t,k,v)$ Steiner systems, which require that any $t$-subset is contained in \emph{exactly one} block.

Assume $(X,\cB)$ is a $(t,k,v)$ covering design. Let us then define
\begin{equation}
\label{eq:rs}
r_s\eqdef \frac{\binom{v-s}{t-s}}{\binom{k-s}{t-s}},
\end{equation}
for all $s\in[t]$. If we are given an $s$-subset, $S\in\binom{X}{s}$, we use $r_S$ to denote the number of blocks containing $S$. By standard counting techniques (e.g., see~\cite{stinson2008combinatorial}), we have
\begin{equation}
\label{eq:rsineq}
r_S \geq \frac{\binom{v-s}{t-s}}{\binom{k-s}{t-s}}=r_s.
\end{equation}
For $(t,k,v)$ Steiner systems, \eqref{eq:rsineq} holds with equality. Since $k\leq v$,
\begin{equation}
\label{eq:rshierarchy}
r_1\geq r_2 \geq \cdots \geq r_{t-1} \geq r_t.
\end{equation}
Additionally, the number of blocks satisfies,
\begin{equation}
\label{eq:minB}
\abs{\cB} \geq \frac{\binom{v}{t}}{\binom{k}{t}},
\end{equation}
and once again, this is attained with equality for Steiner systems.

Let us now define GFR codes based on covering designs.

\begin{definition}
\label{def:cfr}
Let $(X,\cB)$ be a $(t,k,v)$ covering design with blocks $\cB=\mset{B_1,\dots,B_n}$. A \emph{covering fractional repetition (CFR) code} with parameters $(N=\abs{\cB},k,\rho)_v$ is a $k\times N$ array $\cA\in[v]^{k\times N}$, whose columns are the blocks of $\cB$ in some order. Each symbol of $[v]$ appears in at least $\rho$ columns of $\cA$.
\end{definition}

Thus, a CFR code is just the blocks of a covering design written as columns of an array. The order of the columns does not matter. Within each column we may choose to order the block elements in any way we want, however, this crucially determines the skip cost of the code.

The main relaxation of CFR codes, compared with FR codes, is that not every symbol of $[v]$ appears the same number of times in the array $\cA$. We can always safely assume $\rho$ from Definition~\ref{def:cfr} is at least $r_1$ from~\eqref{eq:rs}. If we denote by $\rho_a$ the number of time $a\in[v]$ appears in $\cA$, then we have
\[
kN = \sum_{a\in[v]}\rho_a,
\]
which reduces to~\eqref{eq:knvrho} when all the $\rho_a$ are equal. Thus, each node contains exactly $k$ symbols, and each symbol $a\in[v]$ appears in $\rho_a$ nodes. We comment that~\cite{zhu2014general,zhu2017general} study the transpose design, resulting in each symbol in $[v]$ appearing the same number of times, but nodes may contain a different number of symbols.

Finally, we recall R\"odl's proof of the existence of good covering designs~\cite{rodl1985packing}:

\begin{lemma}[\cite{rodl1985packing}]
\label{lem: Rodl}
For all positive integers $t \leq k \leq v$, there exist $(t,k,v)$ covering designs $(X,\cB)$, with size 
\[\abs*{\cB}=\frac{\binom{v}{t}}{\binom{k}{t}}(1+o(1)),\]
when $t$ and $k$ are constants, and $v\to\infty$. 
\end{lemma}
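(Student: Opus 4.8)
The statement is the Erd\H{o}s--Hanani conjecture, resolved by R\"odl via the semi-random ``nibble'' method; I would reconstruct that argument. The plan is to first build a near-optimal \emph{packing} --- a collection $\cP$ of $k$-subsets in which every $t$-subset is covered at most once --- of size $(1-o(1))\binom{v}{t}/\binom{k}{t}$, and then patch the few $t$-subsets left uncovered. It is convenient to phrase the packing problem as a matching problem in the $\binom{k}{t}$-uniform hypergraph $H$ whose vertices are the $t$-subsets of $[v]$ and whose edges are the $k$-subsets, each $k$-subset $K$ being identified with the set of $\binom{k}{t}$ many $t$-subsets it contains. This hypergraph is nearly regular: every vertex lies in exactly $\binom{v-t}{k-t}$ edges, and, crucially, any two distinct vertices lie in a common edge in at most $\binom{v-t-1}{k-t-1}=o(\binom{v-t}{k-t})$ edges, since the union of two distinct $t$-sets has size at least $t+1$. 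Thus $H$ has degree $D\to\infty$ with codegree $o(D)$, exactly the regime in which near-perfect matchings are known to exist.

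The nibble itself proceeds in rounds indexed by $i=0,1,2,\dots$. Write $\alpha_i$ for the typical fraction of $t$-subsets still uncovered and $D_i$ for the typical number of still-available $k$-subsets through an uncovered $t$-subset, with $\alpha_0=1$ and $D_0=\binom{v-t}{k-t}$. Fix a small constant $\epsilon>0$. In round $i$ I would select each currently available $k$-subset independently with probability $p_i=\epsilon/D_i$, discard any selected block sharing an uncovered $t$-subset with another selected block, and add the survivors to $\cP$. A first-moment computation shows that a fixed uncovered $t$-subset becomes newly covered with probability $\approx 1-e^{-\epsilon}$, and that the conditional expected available-degree of a surviving $t$-subset shrinks by a predictable factor; iterating, $\alpha_i\approx e^{-\epsilon i}$ while the $D_i$ follow a controlled deterministic recursion. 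Because each surviving block covers exactly $\binom{k}{t}$ fresh $t$-subsets, the number of blocks added in round $i$ is $\approx(\text{fraction newly covered})\cdot\binom{v}{t}/\binom{k}{t}$, so after $I$ rounds the total packing size telescopes to $(1-\alpha_I)\binom{v}{t}/\binom{k}{t}$.

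The analytic heart --- and the step I expect to be the main obstacle --- is showing that these heuristic expectations hold with high probability \emph{simultaneously} for all vertices and throughout all rounds, i.e.\ that the process stays ``pseudo-random.'' Concretely, I would prove by induction on $i$ that, conditioned on the pseudo-randomness of rounds $0,\dots,i-1$, the number of uncovered $t$-subsets and each individual available-degree lie within a $(1\pm\gamma_i)$ factor of their idealized values, where $\gamma_i$ grows only mildly in $i$. The concentration comes from a bounded-differences (Azuma-type) martingale inequality applied to the independent block-selection indicators of each round, exploiting that toggling one indicator perturbs each tracked quantity by only $O(1)$ relative to its large mean. Letting $\epsilon\to 0$ and the number of rounds $I\to\infty$ slowly as $v\to\infty$ (a standard diagonalization) drives $\alpha_I\to 0$, so the packing covers a $(1-o(1))$ fraction of all $t$-subsets using $(1-o(1))\binom{v}{t}/\binom{k}{t}$ blocks.

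Finally, to convert the packing into a covering I would add, for each of the remaining $\alpha_I\binom{v}{t}=o(\binom{v}{t})$ uncovered $t$-subsets, one arbitrary $k$-subset containing it. Since $t$ and $k$ are constants, $\binom{k}{t}$ is a constant, so these extra blocks number $o(\binom{v}{t})=o(\binom{v}{t}/\binom{k}{t})$. Adjoining them to $\cP$ yields a genuine $(t,k,v)$ covering design $(X,\cB)$ of total size $\binom{v}{t}/\binom{k}{t}(1+o(1))$, as claimed.
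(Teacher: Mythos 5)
The paper does not actually prove this lemma --- it is quoted as a known result directly from R\"odl~\cite{rodl1985packing} --- and your sketch is a faithful outline of precisely that reference's argument: the semi-random nibble yielding a near-optimal packing (phrased as a near-perfect matching in the $\binom{k}{t}$-uniform hypergraph with degree $\binom{v-t}{k-t}$ and codegree $o\parenv*{\binom{v-t}{k-t}}$), followed by patching each of the $o\parenv*{\binom{v}{t}}$ uncovered $t$-subsets with one arbitrary block. Your reconstruction, including the codegree bound and the packing-to-covering conversion, is correct in outline, so it takes essentially the same route as the source the paper cites.
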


Lemma~\ref{lem: Rodl} shows that it is possible to build covering designs whose number of blocks approaches that of Steiner systems, even in parameters where no Steiner systems exist.

\section{Constructions of Zero Skip-Cost CFR Codes}
\label{sec:cfr}

The goal of this section is to construct CFR codes with zero skip cost. As we show shortly, this is always possible in a trivial way -- simply repeat columns in the CFR code. However, we would also like to minimize the number of columns in the CFR code, a notion we shall also formalize as the \emph{expansion factor}. Another simple construction we present gives an asymptotic optimal expansion factor, but poor expansion factor for many practical parameters. We then present a more elaborate construction which significantly reduces the expansion factor. We conclude by showing, using the probabilistic method, that in fact almost all orderings of covering designs have zero skip cost, and no expansion is needed. This, unfortunately, is not a constructive proof.

We first ask ourselves what is the locality restriction. Particularly, since lower locality is better, we shall determine the minimum locality we can guarantee given just the parameters of the covering design used to construct the CFR code. To that end we first define a set of desired design parameters.

\begin{definition}
Let $(X,\cB)$ be a $(t,k,v)$ covering design. We say it is \emph{properly local} if
\begin{equation}
\label{eq:proploc}
r_{t-1}=\ceil*{\frac{v-t+1}{k-t+1}}\geq \ceil*{\frac{k}{t-1}}+1.
\end{equation}
\end{definition}

\begin{lemma}
\label{lem:locality}
Let $\cA\in[v]^{k\times N}$ be a CFR code based on a properly local $(t,k,v)$ covering design $(X,\cB)$. Then the minimum locality of $\cA$ satisfies
\[
\ell_{\min} \leq \ceil*{\frac{k}{t-1}}.
\]
\end{lemma}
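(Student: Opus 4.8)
The plan is to exhibit, for an arbitrary failed node $j$, an explicit valid repair process that uses at most $m\eqdef\ceil{k/(t-1)}$ helper nodes. Since the skip cost plays no role in the notion of locality, the only thing that matters is to cover the $k$ symbols of the failed block $B_j$ by disjoint pieces, each of which is carried by a distinct helper block other than $B_j$ itself.

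First I would fix the failed block $B_j$ and partition its $k$ symbols into $m=\ceil{k/(t-1)}$ parts $P_1,\dots,P_m$, each of size at most $t-1$ (for instance, $m-1$ parts of size exactly $t-1$ and one remaining part of size between $1$ and $t-1$). Because the $P_i$ form a partition of $B_j$, any choice of helper blocks $B_{h_i}\supseteq P_i$ with transmissions $R^{(h_i)}=P_i$ automatically satisfies both the disjointness requirement $R^{(h_i)}\cap R^{(h_{i'})}=\emptyset$ and the covering requirement $\bigcup_i R^{(h_i)}=B_j$. Thus the whole problem reduces to selecting $m$ distinct block indices $h_1,\dots,h_m\in[N]\setminus\set{j}$ with $P_i\subseteq B_{h_i}$.

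Next I would perform this selection greedily. Each part satisfies $\abs{P_i}\le t-1$, so by~\eqref{eq:rsineq} and~\eqref{eq:rshierarchy} the number of blocks containing $P_i$ is an integer at least $r_{t-1}$, and the proper-locality hypothesis~\eqref{eq:proploc} forces this count to be at least $\ceil{k/(t-1)}+1=m+1$. Processing the parts in order $i=1,\dots,m$, when we reach $P_i$ we have already committed $i-1$ helper indices and must also avoid $B_j$ itself; since at least $m+1$ distinct indices carry a block containing $P_i$, at least $(m+1)-1-(i-1)=m-i+1\ge 1$ of them remain available, so a valid $h_i$ can always be chosen. This yields the desired $m$ distinct helper nodes and shows $\ell_{\min}\le m=\ceil{k/(t-1)}$.

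I expect the only delicate point to be the bookkeeping in the greedy step: one must check that excluding both the failed index $j$ and the previously chosen indices $h_1,\dots,h_{i-1}$ never exhausts the pool of blocks through $P_i$, which is exactly what the extra ``$+1$'' in~\eqref{eq:proploc} provides. A minor subtlety worth flagging is that $\cB$ is a multiset, so the quantities $r_S$ should be read as counts of block \emph{indices} (columns) rather than of distinct sets; this reading is consistent with helper nodes being distinct columns and does not disturb the inequality $r_{P_i}\ge r_{t-1}$.
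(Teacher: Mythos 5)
Your proposal is correct and is essentially the paper's own proof: both arguments partition the erased block into $\ceil*{k/(t-1)}$ parts of size at most $t-1$, transmit each part verbatim from a distinct helper column, and use the greedy count $r_{P_i}\geq r_{t-1}\geq\ceil*{k/(t-1)}+1$ from~\eqref{eq:rsineq}, \eqref{eq:rshierarchy}, and~\eqref{eq:proploc} to show a fresh helper column survives the exclusion of the erased column and all previously chosen ones. Your closing remark that $r_S$ should be read as a count of column indices of the multiset $\cB$ is a sound clarification of a point the paper leaves implicit.
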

\begin{proof}
Let us define
\begin{align*}
q&\eqdef \ceil*{\frac{k}{t-1}}, & r &\eqdef k \umod (t-1).
\end{align*}
Assume a column of $\cA$ is erased containing the elements of the block $\set{x_1,\dots,x_k}$. We partition these elements into $q$ disjoint sets \[S_1\eqdef\set{x_1,\dots,x_{t-1}},\quad S_2\eqdef\set{x_t,\dots,x_{2(t-1)}},\quad \dots\quad S_q\eqdef\set{x_{(q-1)(t-1)+1},\dots,x_k},\]
all of size $t-1$, except perhaps the last set of size $r$. We choose the helper nodes iteratively. In the first round, by~\eqref{eq:rs} and~\eqref{eq:rsineq}, $S_1$ is contained in  
\[r_{S_1}-1\geq r_{t-1} -1 = \ceil*{\frac{v-t+1}{k-t+1}} -1
\]
distinct non-erased columns of $\cA$. We choose one arbitrarily as a helper node. In the second round, once again, we have $r_{S_2}-1\geq r_{t-1}-1$ non-erased columns containing $S_2$. If a previously chosen helper node is in this set, we remove it, and choose a helper node for recovering $S_2$ from the remaining ones. We keep this process until reaching $S_q$, which by~\eqref{eq:rshierarchy} also satisfies $r_{S_q}-1\geq r_{t-1}-1$. Since we have a total of $q$ iterations, and since the covering design is properly local (hence $r_{t-1}-1\geq q$ by~\eqref{eq:proploc}), there is at least one column we can choose as a helper node in each iteration. This results in locality $q$, proving the claim on the minimum locality.
\end{proof}

Following Lemma~\ref{lem:locality}, we shall consider only properly local covering designs, and analyze the skip cost of CFR codes based on them using the restriction on the locality to be
\[
\ell = \ceil*{\frac{k}{t-1}}.
\]
We comment that restricting ourselves to properly local covering designs is a very mild restriction, since all $(v,t,k)$ covering designs with $v\geq\frac{k^2}{t-1}$ are properly local, and increasing $v$ for any constant $t$ and $k$ is possible by Lemma~\ref{lem: Rodl}.

The first construction we describe is completely trivial, serving only the purpose of providing a baseline, and motivating the definition of the expansion factor. To describe it, we define the following notation: for any multiset, $S$, and an integer $n\in\N$, we denote by $n\cdot S$ the multiset that contains $n$ copies of each of the elements of $S$.

\begin{construction}
\label{con:multi}
Let $(X,\cB)$ be a properly local $(t,k,v)$ covering design. We construct the CFR code, $\cA$, whose columns are the blocks of $(X,2\cdot\cB)$, in some arbitrary order.
\end{construction}

\begin{theorem}
Let $\cA$ be the CFR code from Construction~\ref{con:multi}. Then it is a $(2\abs{\cB},k,r_1)_v$-CFR code with locality $1$ and $\cost(\cA)=0$, where $r_1$ is given in~\eqref{eq:rs}.
\end{theorem}

\begin{proof}
We first note that if $(X,\cB)$ is a $(t,k,v)$ covering design, then $(X,2\cdot\cB)$ is also a $(t,k,v)$ covering design. Thus, the parameters of $\cA$ follow by definition. If a node is erased, we can recover all of its contents by reading its copy. This immediately gives us locality $1$ and skip cost $0$.
\end{proof}

Construction~\ref{con:multi} is indeed trivial. It also seems quite wasteful, since we simply have repeated nodes. One might therefore ask, given $k$, $\rho$, and $v$ (or alternatively, given $t$, $k$, and $v$), what is the smallest $N$ for which an $(N,k,\rho)_v$-CFR code exists. We observe that~\eqref{eq:minB} gives a lower bound, which inspires the following definition:

\begin{definition}
Let $\cA$ be an $(N,k,\rho)_v$-CFR code based on a $(t,k,v)$ covering design. The \emph{expansion factor} of $\cA$ is defined as
\[
\xi(\cA) \eqdef \frac{N}{\binom{v}{t}/\binom{k}{t}}.
\]
\end{definition}

We have $\xi(\cA)\geq 1$ for all CFR codes, and we would obviously like to have $\xi(\cA)$ as low as possible. 

\begin{corollary}
\label{cor:xi2}
For any fixed $1\leq t\leq k$, there exists an infinite sequence of integers, $k\leq v_1<v_2<\cdots$, such that there are CFR codes, $\cA_i$, with parameters $(N_i,k,\binom{v_i-1}{t-1}/\binom{k-1}{t-1})_{v_i}$, locality $\ell=1$, with $\cost(\cA_i)=0$ and
\[
\lim_{i\to\infty}\xi(\cA_i)=2.
\]
\end{corollary}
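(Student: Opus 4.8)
The plan is to combine Construction~\ref{con:multi} with R\"odl's existence result (Lemma~\ref{lem: Rodl}) to drive the expansion factor down to its limiting value. The key observation is that Construction~\ref{con:multi} always yields a CFR code with locality $1$ and zero skip cost, but with $N=2\abs{\cB}$ columns; hence its expansion factor is exactly twice that of the underlying covering design. So the entire task reduces to producing a sequence of covering designs whose own expansion factor tends to $1$.

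First I would invoke Lemma~\ref{lem: Rodl}: for the fixed constants $t\le k$ and with $v\to\infty$, there exist $(t,k,v)$ covering designs $(X,\cB_v)$ with $\abs{\cB_v}=\frac{\binom{v}{t}}{\binom{k}{t}}(1+o(1))$. I would select an increasing sequence $v_1<v_2<\cdots$ (with each $v_i\ge k$) along which this asymptotic holds, and along which the designs are properly local; the latter is automatic for all large $v$ since~\eqref{eq:proploc} is satisfied once $v\ge \frac{k^2}{t-1}$, as noted after Lemma~\ref{lem:locality}. Applying Construction~\ref{con:multi} to each $(X,\cB_{v_i})$ produces a CFR code $\cA_i$ with $N_i=2\abs{\cB_{v_i}}$ columns, locality $1$, and $\cost(\cA_i)=0$, by the theorem following that construction. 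The parameter $\rho=r_1=\binom{v_i-1}{t-1}/\binom{k-1}{t-1}$ follows from~\eqref{eq:rs} with $s=1$, matching the stated parameters.

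Finally I would compute the limit of the expansion factor directly. By definition,
\[
\xi(\cA_i)=\frac{N_i}{\binom{v_i}{t}/\binom{k}{t}}=\frac{2\abs{\cB_{v_i}}}{\binom{v_i}{t}/\binom{k}{t}}=2(1+o(1)),
\]
where the last equality substitutes the R\"odl estimate for $\abs{\cB_{v_i}}$. Letting $i\to\infty$ gives $\lim_{i\to\infty}\xi(\cA_i)=2$, as claimed.

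There is no real obstacle here: the statement is essentially a packaging of the trivial Construction~\ref{con:multi} together with the quoted existence theorem, and the only point requiring a moment's care is confirming that the chosen sequence $v_i$ can simultaneously realize the R\"odl asymptotic and guarantee proper locality. Both are satisfied for all sufficiently large $v$, so one simply discards the finitely many small values and the sequence exists. The corollary is therefore immediate once Lemma~\ref{lem: Rodl} and Construction~\ref{con:multi} are in place.
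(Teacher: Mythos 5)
Your proposal is correct and follows essentially the same route as the paper's own proof: invoke Lemma~\ref{lem: Rodl} to obtain a sequence of $(t,k,v_i)$ covering designs approaching the Steiner bound, then apply Construction~\ref{con:multi}, so that $N_i=2\abs{\cB_{v_i}}$ forces $\xi(\cA_i)\to 2$. If anything, you are slightly more careful than the paper, since you explicitly verify proper locality for large $v_i$ (a hypothesis of Construction~\ref{con:multi}) and spell out the limit computation, both of which the paper leaves implicit.
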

\begin{proof}
By Lemma~\ref{lem: Rodl}, there exists a sequence of $(t,k,v_i)$ covering design, $(X_i,\cB_i)$, for which
\[
\lim_{i\to\infty} \frac{\abs*{\cB_i}}{\binom{v_i}{t}/\binom{k}{t}}=1.
\]
The claimed result then follows by using Construction~\ref{con:multi} on this sequence.
\end{proof}

Having constructed zero skip-cost CFR codes with asymptotic expansion factor of $2$, we turn to show a construction with an asymptotic expansion factor of $1$. Unlike the previous construction, this time we shall have to be more specific regarding the ordering of elements in columns.

\begin{construction}\label{con:combination}
Let $(X,\cB)$ be a properly local $(t,k,v)$ covering design, $2\leq t\leq k$, and let $(X,\cB')$ be a $(t-1,k,v)$ covering design. Define $r\eqdef k \umod (t-1)$. 

If $r=t-1$ or $r=1$, we construct the CFR code, $\cA$, whose columns are the blocks of $(X,\cB\cup \binom{k}{t-1}\cdot \cB')$ arranged in the following order:
\begin{itemize}
\item
Blocks from $\cB$ appear in some arbitrary order.
\item
Each block of $\cB'$ is written in $\binom{k}{t-1}$ columns, where each possible $(t-1)$-subset of the block appears contiguously as the first $t-1$ elements in one of the $\binom{k}{t-1}$ copies.
\end{itemize}

If $2\leq r\leq t-2$, we construct the CFR code, $\cA$, whose columns are the blocks of $(X,\cB\cup \binom{k}{t-1}\cdot \cB' \cup \binom{k}{r}\cdot \cB')$ arranged in the following order:
\begin{itemize}
\item
Blocks from $\cB$ appear in some arbitrary order.
\item
Each block of $\cB'$ is written in $\binom{k}{t-1}$ columns, where each possible $(t-1)$-subset of the block appears contiguously as the first $t-1$ elements in one of the $\binom{k}{t-1}$ copies.
\item
Each block of $\cB'$ is written in $\binom{k}{r}$ columns, where each possible $r$-subset of the block appears contiguously as the first $r$ elements in one of the $\binom{k}{r}$ copies.
\end{itemize}
\end{construction}

\begin{theorem}
\label{th:combination}
Let $\cA$ be the CFR code from Construction~\ref{con:combination}. Define
\[
N \eqdef \begin{cases}
\abs{\cB}+\binom{k}{t-1}\abs{\cB'} & r=1,t-1, \\
\abs{\cB}+\parenv*{\binom{k}{t-1}+\binom{k}{r}}\abs{\cB'} & 2\leq r\leq t-2. \\
\end{cases}
\]
Then $\cA$ is a $(N,k,r_1)_v$-CFR code with locality $\ell=\ceil{k/(t-1)}$ and $\cost(\cA)=0$, where $r_1$ is given in~\eqref{eq:rs}.
\end{theorem}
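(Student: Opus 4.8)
The plan is to dispatch the parameters and locality quickly and then concentrate on the only substantive claim, $\cost(\cA)=0$, which I would prove by exhibiting for every erased column an explicit repair of zero skip cost using at most $q\eqdef\ceil*{k/(t-1)}$ helper nodes.

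For the bookkeeping: the column multiset of $\cA$ contains $\cB$ as a sub-multiset, and $\cB$ is already a $(t,k,v)$ covering design, so the covering property is inherited and the columns of $\cA$ form a $(t,k,v)$ covering design; likewise each symbol already occurs at least $r_1$ times among the $\cB$-columns, so $\cA$ is an $(N,k,r_1)_v$-CFR code, with $N$ obtained by counting columns in each of the two cases. Proper locality together with Lemma~\ref{lem:locality} gives $\ell_{\min}\le q$, and since the repairs constructed below use exactly $q$ helpers, $\ell=q$.

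For the skip cost I would rely on three facts: reading a single entry from any helper costs $0$; reading a contiguous prefix of length $t-1$ or $r$ from a copy costs $0$; and the skip cost of repairing a node is insensitive to the internal ordering of the erased column, so the partition of the erased block $B_j$ is ours to choose. I would partition $B_j$ into $q-1$ parts of size $t-1$ and one trailing part of size $r$. Each size-$(t-1)$ part is a $(t-1)$-subset of $[v]$, hence lies in some block of $\cB'$ and so appears as the contiguous prefix of a corresponding copy, read at cost $0$; the trailing part is read at cost $0$ as a single entry when $r=1$, as the prefix of an $r$-prefix copy when $2\le r\le t-2$ (noting that any $r$-set with $r<t-1$ still lies in a block of $\cB'$), and exactly like the other parts when $r=t-1$. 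Because the parts are pairwise disjoint, the selected prefix copies carry distinct prefixes and are therefore distinct columns.

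The hard part will be ensuring that none of these helpers is the erased column itself, which can only fail when the erased column is one of the prefix copies of $\cB'$, say with prefix $F$. The key is that $q\ge 2$ always holds here (since $t-1<t\le k$), so $B_j$ has strictly more than $t-1$ elements and I can arrange the partition so that no part equals $F$ --- for instance by placing one element of $F$ and one element of $B_j\setminus F$ into a common part. Then every prefix copy I invoke has a prefix distinct from $F$ and hence differs from the erased column, and when $r=1$ the single-entry read is available because proper locality forces each symbol into at least $r_1\ge q+1$ of the $\cB$-columns, exceeding the at most $q$ columns that must be avoided. This produces a zero-skip-cost repair with $q$ helpers for every column, yielding $\cost(\cA)=0$ and $\ell=q$; the only remaining tasks are the column count for $N$ and the elementary verification that single-entry and prefix reads have zero skip cost, which I would relegate to routine checking.
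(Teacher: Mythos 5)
Your proof is correct, and for the central case (an erased column coming from $\cB$) it is exactly the paper's argument: partition the erased block into $q-1$ parts of size $t-1$ plus one part of size $r$, and read each part as a contiguous prefix of a suitable copy of a $\cB'$-block (or as a single entry when $r=1$), the disjointness of the parts guaranteeing distinct helpers. Where you genuinely diverge is in repairing an erased column that is itself one of the prefix copies of a $\cB'$-block. The paper dispatches this case in one line: each block of $\cB'$ appears in $\binom{k}{t-1}\geq 2$ columns, so one simply reads a surviving full copy, which has zero skip cost regardless of its internal ordering and uses a single helper. You instead re-run the partition repair on such a column, and must then ensure no helper coincides with the erased column; this forces the ``spread $F$ across two parts'' rearrangement and, for $r=1$, the availability count via proper locality ($r_1\geq q+1$ blocks of $\cB$ contain each symbol, while at most $q$ columns must be avoided). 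That machinery is sound, but note one wrinkle: when $t=2$ every part is a singleton, so your claim that the partition can be arranged with no part equal to $F$ fails there; your argument survives only because in that case $r=t-1=1$ and every part is read as a single entry, which is position-independent, so the conflict with $F$ is harmless. The paper's treatment of the $\cB'$ case is shorter and avoids this edge case entirely; your heavier route is still valid and incidentally shows that full-column reads of the duplicate copies are not essential, but it gains nothing in parameters, since both repairs stay within locality $\ell=\lceil k/(t-1)\rceil$.
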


\begin{proof}
The number of columns and rows in $\cA$ is immediate by construction. Since $(X,\cB)$ is a $(t,k,v)$ covering design, adding the copies of blocks from $\cB'$ does not change that fact. Thus, $\rho=r_1$ is guaranteed by~\eqref{eq:rs}. The locality is then guaranteed by Lemma~\ref{lem:locality}.

Finally, we examine the skip cost of $\cA$. Assume a column is erased. If that column comes from a block of $\cB'$, then there are at least $\binom{k}{t-1}\geq 2$ copies of it. By reading a non-erased copy we can recover the column easily. If the erased column comes from a block of $\cB$, assume its contents is $\set{x_1,x_2,\dots,x_k}$. Partition this set into $\ell$ sets, $P_1,P_2,\dots,P_\ell$, of size $t-1$ each, except perhaps the last one of size $r$. For each $i\in[\ell-1]$, find a block in $\cB'$ that contains $P_i$. Then find a column containing this block with the elements of $P_i$ being the first $\abs{P_i}$ elements, and read those elements from the column at zero skip cost. For $i=\ell$, if $r=t-1$, then the same argument may be used. If $r=1$, then reading a single element always has zero skip cost. Otherwise, if $2\leq r\leq t-2$, find a block of $\cB'$ containing the $r$ elements of $P_\ell$, and then a column containing this block with the first $r$ elements being those of $P_\ell$, and read them. Overall, different $i$ result in different accessed columns. The total skip cost is therefore $0$.
\end{proof}

\begin{corollary}
\label{cor:xi1}
For any fixed $2\leq t \leq k$, there exists an infinite sequence of integers, $k\leq v_1<v_2<\cdots$, such that there are CFR codes, $\cA_i$, with parameters $(N_i,k,\binom{v_i-1}{t-1}/\binom{k-1}{t-1})_{v_i}$, locality $\ell=\ceil{k/(t-1)}$, with $\cost(\cA_i)=0$ and
\[
\lim_{i\to\infty}\xi(\cA_i)=1.
\]
\end{corollary}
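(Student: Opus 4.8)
The plan is to combine R\"odl's theorem (Lemma~\ref{lem: Rodl}) with Theorem~\ref{th:combination}, exactly mirroring the structure of the proof of Corollary~\ref{cor:xi2}. Since $2\le t\le k$ is fixed, I would first invoke Lemma~\ref{lem: Rodl} twice: once to produce a sequence of $(t,k,v_i)$ covering designs $(X_i,\cB_i)$ with $\abs{\cB_i}=\frac{\binom{v_i}{t}}{\binom{k}{t}}(1+o(1))$, and once to produce a sequence of $(t-1,k,v_i)$ covering designs $(X_i,\cB_i')$ on the same point sets, with $\abs{\cB_i'}=\frac{\binom{v_i}{t-1}}{\binom{k}{t-1}}(1+o(1))$. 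I would also need each $(X_i,\cB_i)$ to be properly local, which, as the paper notes after Lemma~\ref{lem:locality}, holds automatically once $v_i\ge k^2/(t-1)$; since R\"odl's construction lets $v_i\to\infty$, I may simply pass to a subsequence on which this bound is met, so every design in the sequence is properly local.

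Next I would apply Construction~\ref{con:combination} and Theorem~\ref{th:combination} to each pair $(\cB_i,\cB_i')$, yielding CFR codes $\cA_i$ with locality $\ell=\ceil{k/(t-1)}$, $\cost(\cA_i)=0$, and $\rho=r_1=\binom{v_i-1}{t-1}/\binom{k-1}{t-1}$, matching the stated parameters. The number of columns is $N_i=\abs{\cB_i}+c\cdot\abs{\cB_i'}$, where $c=\binom{k}{t-1}$ (or $\binom{k}{t-1}+\binom{k}{r}$ when $2\le r\le t-2$) is a constant depending only on the fixed $t,k$. The core of the argument is then the asymptotic estimate of the expansion factor
\[
\xi(\cA_i)=\frac{N_i}{\binom{v_i}{t}/\binom{k}{t}}=\frac{\abs{\cB_i}}{\binom{v_i}{t}/\binom{k}{t}}+\frac{c\,\abs{\cB_i'}}{\binom{v_i}{t}/\binom{k}{t}}.
\]
The first term tends to $1$ by the choice of $\cB_i$, so everything hinges on showing the second term vanishes.

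The one genuine point to verify — and the step I expect to carry the real content — is that the contribution of the $\cB_i'$ copies is asymptotically negligible. Here the key observation is a dimension count: $\abs{\cB_i'}$ grows like $\binom{v_i}{t-1}$ while the normalizing denominator grows like $\binom{v_i}{t}$, and for fixed $t$ the ratio $\binom{v_i}{t-1}/\binom{v_i}{t}=t/(v_i-t+1)\to 0$ as $v_i\to\infty$. Concretely,
\[
\frac{c\,\abs{\cB_i'}}{\binom{v_i}{t}/\binom{k}{t}}
= c\cdot\frac{\binom{k}{t}}{\binom{k}{t-1}}\cdot\frac{\binom{v_i}{t-1}}{\binom{v_i}{t}}\,(1+o(1))
= c\cdot\frac{\binom{k}{t}}{\binom{k}{t-1}}\cdot\frac{t}{v_i-t+1}\,(1+o(1))\xrightarrow{i\to\infty}0,
\]
since $c$, $t$, $k$ are all fixed constants. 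Combining the two terms gives $\lim_{i\to\infty}\xi(\cA_i)=1$, which is the claim. No genuine obstacle arises beyond this elementary binomial asymptotic; the bookkeeping is entirely analogous to Corollary~\ref{cor:xi2}, with the only new ingredient being that the lower-order design $\cB_i'$ is of order $(t-1)$ and therefore contributes a lower-order number of blocks.
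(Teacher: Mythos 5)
Your proposal is correct and follows essentially the same route as the paper's proof: R\"odl's lemma applied to both a $(t,k,v_i)$ and a $(t-1,k,v_i)$ sequence, followed by Construction~\ref{con:combination} and Theorem~\ref{th:combination}, with the key point being that $\abs{\cB_i'}=o(\abs{\cB_i})$ so the added copies are asymptotically negligible. Your version is in fact slightly more careful than the paper's, since you make the binomial ratio $\binom{v_i}{t-1}/\binom{v_i}{t}=t/(v_i-t+1)$ explicit and verify the properly-local hypothesis by passing to $v_i\geq k^2/(t-1)$, both of which the paper leaves implicit.
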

\begin{proof}
By Lemma~\ref{lem: Rodl}, there exists a sequence of $(t,k,v_i)$ covering designs, $(X_i,\cB_i)$, and a sequence of $(t-1,k,v_i)$ covering designs, $(X_i,\cB'_i)$, with
\begin{align}
\label{eq:cbb1}
\abs*{\cB} &= \frac{\binom{v_i}{t}}{\binom{k}{t}} (1+o(1)), &
\abs*{\cB'} &= \frac{\binom{v_i}{t-1}}{\binom{k}{t-1}} (1+o(1)).
\end{align}
Since $t$ and $k$ are constants, we have
\begin{align*}
\abs*{\cB} &= \Theta(v_i^t)(1+o(1)), &
\abs*{\cB'} &= \Theta(v_i^{t-1}) (1+o(1)),
\end{align*}
implying that
\begin{equation}
\label{eq:cbb2}
\abs{\cB'}=o(\abs{\cB}).
\end{equation}
We apply Construction~\ref{con:combination} to the pairs $(X_i,\cB_i)$ and $(X_i,\cB'_i)$. If $r=1,t-1$, by Theorem~\ref{th:combination} we obtain a sequence of $(\abs{\cB_i}+\binom{k}{t-1}\abs{\cB'_i},k,r_1)$-CFR codes, $\cA_i$. By combining~\eqref{eq:cbb1} and~\eqref{eq:cbb2}, we have
\[
\lim_{i\to\infty}\xi(\cA_i) = \lim_{i\to\infty}\frac{\abs*{\cB_i}+\binom{k}{t-1}\abs*{\cB'_i}}{\binom{v_i}{t}/\binom{k}{t}} = 1.
\]
If $2\leq r\leq t-2$ a similar argument applies.
\end{proof}

While Construction~\ref{con:combination} attains an expansion factor of $1$, it does so only asymptotically. Thus, for many practical values its performance is inferior to the trivial Construction~\ref{con:multi}, as the following example illustrates.

\begin{example}
Consider the parameters $(t,k,v)=(5,6,12)$. In order to use Construction~\ref{con:combination}, we consult the online database of covering designs~\cite{dmgordon_coverings}, finding a $(5,6,12)$ covering design $(X,\cB)$ with $\abs{\cB}=132$ blocks, and a $(4,6,12)$ covering design $(X,\cB')$ with $\abs{\cB'}=41$ blocks. Using Construction~\ref{con:combination}, we obtain a CFR code, $\cA_2$, with parameters $(N=1362,k=5,\rho=66)_{12}$. For this CFR code, the expansion factor is
\[
\xi(\cA_2) = \frac{1362}{\binom{12}{5}/\binom{6}{5}}\approx 10.32.
\]
However, a straightforward use of the $(5,6,12)$ covering design, $(X,\cB)$, with Construction~\ref{con:multi}, results in a CFR code, $\cA_1$, with parameters $(N=264,k=5,\rho=66)_{12}$, for which we get a much lower expansion factor
\[
\xi(\cA_1) = \frac{264}{\binom{12}{5}/\binom{6}{5}}=2.
\]
\end{example}

Motivated by this example, we provide a more elaborate construction of CFR codes. It is based on a method for constructing covering designs from other covering designs that is inspired by the doubling construction of Steiner quadruple systems of Hanani~\cite{hanani1960quadruple} used in~\cite{chee2024repairing} to get a zero skip-cost FR code. While the construction almost never results in an optimal asymptotic expansion factor, it out-performs Construction~\ref{con:multi} and Construction~\ref{con:combination} for many practical values.

Before proceeding, we introduce the following notation. For any vector $\bx = (x_1, \dots, x_k) \in [v]^k$ and any $a\in [v]$, we denote 
\[\abs*{\bx}_a \eqdef \abs*{\set*{i \in [k] ~:~ x_i = a}}\] 
the number of coordinates of $\bx$ containing $a$. 

\begin{construction}\label{con:recursive}
Let $(X,\cB)$ be a properly local $(t,k,v)$ covering design, $X=[v]$, $\cB\subseteq\binom{X}{k}$. Define
\begin{align}
\label{eq:qr}
q & \eqdef \ceil*{\frac{k}{t-1}} &
r & \eqdef k \umod (t-1).
\end{align}
We construct a design $(X^*,\cB^*)$, with $X^*=[v^*]$, $\cB^*\subseteq\binom{X^*}{k}$, where $v^*=qv$, in the following way.

We first define three index sets:
\begin{align*}
I_1 & \eqdef \set*{\bi\in[q]^k ~:~ \mset{\abs{\bi}_1,\dots,\abs{\bi}_q}=\mset{r,\underbrace{t-1,\dots,t-1}_{q-1}}},\\
I_2 & \eqdef \set*{\bi\in[q]^k ~:~ \mset{\abs{\bi}_1,\dots,\abs{\bi}_q}=\mset{0,t-1+r,\underbrace{t-1,\dots,t-1}_{q-2}}},\\
I_3 & \eqdef \set*{\bi\in[q]^k ~:~ \mset{\abs{\bi}_1,\dots,\abs{\bi}_q}=\mset{m,t-1+r-m,\underbrace{t-1,\dots,t-1}_{q-2}},r+1\leq m\leq \floor*{\frac{t}{q}}}.
\end{align*}
Given an integer $x\in X=[v]$, and an integer $i\in[q]$, we define
\[ \angenv*{x,i}\eqdef x+(i-1)v\in X^*=[v^*].\]
Similarly, for $U\subseteq X=[v]$, we define the translation of $U$ by $(i-1)v$ as
\[
\angenv*{U,i} \eqdef \set*{ \angenv*{u,i} ~:~ u\in U} \subseteq X^*=[v^*].
\]
We then define four sets of blocks:
\begin{align*}
\cB_1 &\eqdef \set*{\bigcup_{j=1}^{q-1}\angenv*{U,i_j}\cup\angenv{V,i_q} ~:~ U\in\binom{X}{t-1}, V\in\binom{U}{r}, \set*{i_1,\dots,i_q}=[q]}, \\
\cB_2 &\eqdef \left\{\bigcup_{j=1}^{q-2}\angenv*{U,i_j}\cup\angenv*{V,i_{q-1}}\cup\angenv*{W,i_q} ~:~ U\in\binom{X}{t-1},V\in\binom{U}{m}, \right. \\
&\qquad\qquad\qquad\qquad\quad \left. W\in\binom{U}{t-1+r-m},\set*{i_1,\dots,i_q}=[q], r+1\leq m\leq \floor*{\frac{t}{q}} \right\},\\
\cB_3 &\eqdef \Bigg\{ \set*{x_1,\dots,x_k}\in \binom{X^*}{k} ~:~ \set*{x_1\umod v,\dots, x_k\umod v}\in\cB,\\
&\qquad\qquad\qquad\qquad\qquad\qquad\quad \parenv*{\ceil*{\frac{x_1}{v}},\dots,\ceil*{\frac{x_k}{v}}}\in I_1\cup I_2 \Bigg\}\\
\cB_4 &\eqdef \Bigg\{ \set*{x_1,\dots,x_k}\in \binom{X^*}{k} ~:~ \set*{x_1\umod v,\dots, x_k\umod v}\in\cB,\\
&\qquad\qquad\qquad\qquad\qquad\qquad\quad \parenv*{\ceil*{\frac{x_1}{v}},\dots,\ceil*{\frac{x_k}{v}}}\in I_3 \Bigg\}
\end{align*}
The entire block set, $\cB^*$ is then defined by
\[
\cB^* = 
\begin{cases}
\cB_1 \cup 	\cB_3, &\floor*{\frac{t}{q}} \leq r \leq t-1,\\
\cB_1 \cup \cB_2 \cup 	\cB_3 \cup \cB_4, &1 \leq r \leq \floor*{\frac{t}{q}} -1.
\end{cases}
\]
Finally, we construct the CFR code, $\cA$, by writing the blocks of $\cB^*$ in columns, where the numbers in each column appear in ascending order.
\end{construction}

\begin{theorem}
\label{th:coveringrec}
Assume the setting of Construction~\ref{con:recursive}. Then $(X^*,\cB^*)$ is a $(t,k,qv)$ covering design. Additionally,
\begin{equation}
\label{eq:bstar}
\abs*{\cB^*} = 
\begin{cases}
c_1(t,k)\abs*{\cB} + \binom{v}{t-1}, &r = t-1,\\
c_2(t,k)\abs*{\cB} + q\binom{t-1}{r}\binom{v}{t-1}, &\floor*{\frac{t}{q}} \leq r \leq t-2,\\
c_3(t,k)\abs*{\cB} + q\binom{t-1}{r}\binom{v}{t-1} \\
\quad + \sum_{m=r+1}^{\floor{\frac{t}{q}}} q(q-1) \binom{t-1}{m}\binom{t-1}{t-1+r-m}\binom{v}{t-1}, &1 \leq r \leq \floor*{\frac{t}{q}} -1,\\
\end{cases}
\end{equation}
where
\begin{equation}
\label{eq:c123}
\begin{split}
c_1(t,k) &= \frac{k!}{(t-1)!^q} + q(q-1) \frac{k!}{(2t-2)! (t-1)!^{q-2} },\\
c_2(t,k) &= q \frac{k!}{r!(t-1)!^{q-1}} + q(q-1) \frac{k!}{(t-1+r)! (t-1)!^{q-2}},\\
c_3(t,k) &= c_2(t,k)+ q(q-1)\sum_{m=r+1}^{ \floor{\frac{t}{q}}} \frac{k!}{m!(t-1+r-m)! (t-1)!^{q-2} }.
\end{split}
\end{equation}
\end{theorem}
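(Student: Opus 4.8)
The plan is to prove the two assertions separately: first that every $t$-subset of $X^*$ is covered, then that the four families $\cB_1,\dots,\cB_4$ are disjoint and can be counted term by term. For the covering claim I would take an arbitrary $T^*\in\binom{X^*}{t}$ and decompose each $y\in T^*$ as $\angenv*{p,c}$ with base point $p=y\umod v$ and copy index $c=\ceil*{y/v}$. Writing $d_i$ for the number of elements of $T^*$ in copy $i$, we have $\sum_{i=1}^q d_i=t$, and since a single copy carries distinct base points, $d_i$ is exactly the number of distinct base points of $T^*$ in copy $i$. The first regime is when the base points of $T^*$ are all distinct: then they form a $t$-subset of $X$, covered by some $B\in\cB$, and I would lift $B$ to $X^*$ by assigning each of its $k$ points to a copy so that every point of $T^*$ lands in its prescribed copy and the resulting copy-profile $(e_1,\dots,e_q)$ (with $e_i\ge d_i$, $\sum e_i=k$) has multiset in $I_1\cup I_2$ (a block of $\cB_3$) or $I_3$ (a block of $\cB_4$); the free $k-t$ points of $B$ fill the profile. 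Everything reduces to showing that every profile $(d_i)$ is dominated by an admissible pattern, which I would do in three sub-cases: if all of $T^*$ sits in one copy, the $I_2$ pattern with entry $t-1+r\ge t$ dominates; if $\min_i d_i\le r$, the $I_1$ pattern dominates after placing its entry $r$ on the lightest copy; otherwise $\min_i d_i\ge r+1$, and I use an $I_3$ pattern with $m=\min_i d_i$, placing $m$ and $t-1+r-m$ on the two lightest copies.

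The single inequality making the last sub-case work is that the two smallest $d_i$ sum to at most $2t/q\le t\le t-1+r$ (the two lightest entries have average at most the overall average $t/q$, then $q\ge2$ and $r\ge1$), so $t-1+r-m\ge d_{(q-1)}$. Crucially, this third sub-case can occur only when $r<\floor*{t/q}$, which is exactly the regime in which $\cB_4$ (and, below, $\cB_2$) is included in $\cB^*$; this is how the threshold $\floor*{t/q}$ appears. In the second regime some base point repeats across copies, so $T^*$ has $s\le t-1$ distinct base points, which I extend to a $(t-1)$-set $U$ and cover using a block built on $U$: a block of $\cB_1$ puts a full translate of $U$ in $q-1$ copies and only an $r$-subset in one copy, covering $T^*$ whenever some copy carries at most $r$ base points, and when every copy carries more than $r$ (possible only for $r<\floor*{t/q}$) a block of $\cB_2$, whose two light copies carry an $m$-subset and a $(t-1+r-m)$-subset of $U$, absorbs the two lightest copies by the same domination bound. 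Since $\cB_1,\cB_2$ are defined for every $U\in\binom{X}{t-1}$, this regime does not use the covering property of $\cB$.

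For the cardinality I would first show the four families are pairwise disjoint, reducing $\abs{\cB^*}$ to a sum. A block of $\cB_1$ or $\cB_2$ reduces modulo $v$ to $U$, hence has only $t-1$ distinct base points, whereas a block of $\cB_3$ or $\cB_4$ reduces to a block of $\cB$ and has $k>t-1$ distinct base points, separating $\set*{\cB_1,\cB_2}$ from $\set*{\cB_3,\cB_4}$; within each group the copy-profiles differ as multisets ($I_1$ for $\cB_1$ and part of $\cB_3$, $I_2$ for part of $\cB_3$, $I_3$ for $\cB_2$ and $\cB_4$), giving $\cB_1\cap\cB_2=\cB_3\cap\cB_4=\emptyset$. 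Then each $B\in\cB$ lifts into $\cB_3$ in one way per $I_1$- or $I_2$-profile and into $\cB_4$ in one way per $I_3$-profile, and the number of labeled assignments realizing a fixed multiset profile is the corresponding multinomial times the number of ways to place the distinguished sizes on the $q$ copies; summing reproduces $c_1,c_2,c_3$ of \eqref{eq:c123}, contributing $c_i\abs{\cB}$. The families $\cB_1,\cB_2$ produce the $\binom{v}{t-1}$ terms: $\binom{v}{t-1}$ choices of $U$ times the ways to pick $V$ (and $W$) and to place the light copies, i.e.\ $\binom{v}{t-1}$ for $r=t-1$, $q\binom{t-1}{r}\binom{v}{t-1}$ for $\cB_1$, and $\sum_m q(q-1)\binom{t-1}{m}\binom{t-1}{t-1+r-m}\binom{v}{t-1}$ for $\cB_2$. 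Collecting terms matches \eqref{eq:bstar}.

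The principal difficulty will be the covering step, namely proving that the finite list $I_1,I_2,I_3$ (with its matching block families) is genuinely \emph{exhaustive} and that the split at $\floor*{t/q}$ precisely matches which families are present; I expect the two-lightest-copies bound $d_{(q-1)}+d_{(q)}\le t-1+r$ underlying the $I_3$ case to be the delicate point. A secondary subtlety is injectivity of the parameterizations in the count: the one degenerate situation is the symmetric size $m=t-1+r-m$ in $\cB_2$ and $\cB_4$, where swapping the two light copies with their sets yields the same block. This forces $q=2$, $r=1$, hence $k=t$ with $t$ even and $m=t/2$, so the displayed factor $q(q-1)$ overcounts by two there; that trivial case (where $\cB=\binom{X}{t}$) must be treated separately or excluded.
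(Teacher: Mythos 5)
The covering half of your proposal is essentially the paper's own proof: the same base-point/copy-index decomposition of a $t$-subset of $X^*$, the same two regimes (all base points distinct, handled by lifting a covering block of $\cB$ to a profile in $I_1\cup I_2\cup I_3$; repeated base points, handled by blocks of $\cB_1\cup\cB_2$ built on a $(t-1)$-set $U$), and the same selection of the profile via the minimum copy multiplicity, up to a trivially different arrangement of sub-cases. You are in fact more careful than the paper, which asserts that the needed index extensions exist without ever writing down the domination inequality $d_{(q-1)}+d_{(q)}\le t-1+r$ that you correctly isolate as the crux; that half is sound.

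The gap is in the counting, namely your claim that the symmetric size $m=t-1+r-m$ is ``the one degenerate situation.'' You are right that this case (forcing $q=2$, $r=1$, $k=t$, $t$ even) makes the factor $q(q-1)$ overcount by $2$, but it is not the only injectivity failure. When $q=2$ (so the union $\bigcup_{j=1}^{q-2}\angenv*{U,i_j}$ is empty), a block of $\cB_2$ is just $\angenv*{V,i_1}\cup\angenv*{W,i_2}$ with $\abs{V}=m<t-1$ and $\abs{W}=t-1+r-m<t-1$: no copy carries the full set $U$, so the block does not determine $U$. Whenever $\abs{V\cup W}<t-1$ --- always realizable here, e.g.\ $V\subseteq W$ gives $\abs{V\cup W}=t-1+r-m<t-1$ because $m\ge r+1$ --- every $U\in\binom{X}{t-1}$ containing $V\cup W$ produces the \emph{same} block, so the factor $\binom{v}{t-1}$ in the $\cB_2$ term overcounts by an amount that grows with $v$, not by a constant. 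This occurs for non-trivial parameters with $k>t$: for $(t,k)=(6,7)$ one gets $q=2$, $r=2$, $m=3$, and the block $\angenv*{V,1}\cup\angenv*{W,2}$ with $V\subseteq W$, $\abs{W}=4$ arises from all $v-4$ sets $U\supseteq W$; hence excluding the trivial case $k=t$, as you propose, does not repair the count. Note that both failure modes live only in the regime $1\le r\le\floor{t/q}-1$, so your counting of $\cB_1$ and $\cB_3$, your disjointness argument, and the first two cases of \eqref{eq:bstar} are fine. The paper gives you no guidance here: its proof dismisses \eqref{eq:bstar} as ``straightforward counting,'' i.e., it tacitly counts parameter tuples (a multiset reading under which the formula is a tautology); under the set reading $\cB^*\subseteq\binom{X^*}{k}$ that the construction states and that you adopted, the count genuinely fails exactly where your injectivity analysis breaks down.
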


\begin{proof}
By construction, $\abs{X^*}=v^*=qv$, and $\cB^*\subseteq\binom{X^*}{k}$. Additionally, $\abs{\cB^*}$ follows directly from the construction by straightforward counting. To show that $(X^*,\cB^*)$ is a $(t,k,qv)$ covering design, it remains to prove that any $t$-subset, $T\in\binom{X^*}{t}$ is contained in at least one block of $\cB^*$. Consider such a $t$-subset, $T=\set{y_1,y_2,\dots,y_t}\in\binom{X^*}{t}$, and for all $j\in[t]$, write
\begin{align*}
x_j &= y_j \umod v, & i_j & = \ceil*{\frac{y_j}{v}},
\end{align*}
so
\[
y_j=\angenv*{x_j,i_j}, \ \text{ with }\  x_j\in X=[v], i_j\in[q].
\]
For convenience, we denote $\bi=(i_1,\dots,i_t)$. We distinguish between two cases.

\textbf{Case 1:} Assume $\abs{\set{x_1,\dots,x_t}}=t$. In this case, since $(X,\cB)$ is a $(t,k,v)$ covering design, there exists a block $\set{x_1,\dots,x_k}\in\cB$ that contains $x_1,\dots,x_t$. Our strategy in this case is to show that there must exist a vector $\bi^*=(i_1,\dots,i_k)\in I_1\cup I_2\cup I_3$ whose prefix is $\bi$, and define,
\[
T^*\eqdef \set*{\angenv*{x_1,i_1},\dots,\angenv*{x_k,i_k}}.
\]
It would then follow that $T\subseteq T^*\in\cB_3\cup\cB_4$, implying, depending on the parameter range, that $T\in\cB^*$. We distinguish between the following sub-cases.

If $\abs{\set{i_1,\dots,i_t}}\leq q-1$, then there exists $\bi^*\in I_2$ whose prefix is $\bi$. Then $T\subseteq T^*\in\cB_3\subseteq \cB^*$. Otherwise, $\set{i_1,\dots,i_t}=[q]$. Let $m$ be the smallest multiplicity in $i_1,\dots,i_t$, i.e.,
\begin{equation}
\label{eq:defm}
m \eqdef \min_{a\in[q]} \abs*{\bi}_a.
\end{equation}
Obviously,
\begin{equation}
\label{eq:mtq}
m \leq \floor*{\frac{t}{q}}.
\end{equation}
If $1\leq m\leq r$, then we can find $\bi^*\in I_1$ whose prefix is $\bi$, and so $T\subseteq T^*\in\cB_3\subseteq \cB^*$. Otherwise, we have $r+1\leq m\leq \floor{\frac{t}{q}}$, implying we are in the construction case of $1\leq r\leq\floor{\frac{t}{q}}-1$. We can then find $\bi^*\in I_3$ whose prefix is $\bi$, and so $T\subseteq T^*\in\cB_4\subseteq \cB^*$.

\textbf{Case 2:} Assume $\abs{\set{x_1,\dots,x_t}}\leq t-1$. In this case, we will show that $T\in\cB_1\cup\cB_2$, depending on the parameter range, and therefore, $T\in\cB^*$. We define $m$ as in~\eqref{eq:defm}, and therefore~\eqref{eq:mtq} holds as well. If $1\leq m\leq r$, then by construction, $T\in\cB_1$, and therefore $T\in\cB^*$. Otherwise, we have $r+1\leq m\leq \floor{\frac{t}{q}}$, implying we are in the construction case of $1\leq r\leq\floor{\frac{t}{q}}-1$. By construction, we then have $T\in\cB_2$, implying $T\in\cB^*$.
\end{proof}

\begin{theorem}
\label{th:cfrrec}
Let $\cA$ be the CFR code from Construction~\ref{con:recursive}. Then it is a $(\abs{\cB^*},k,\binom{qv-1}{t-1}/\binom{k-1}{t-1})_{qv}$-CFR code with locality $\ell=\ceil{k/(t-1)}$ and $\cost(\cA)=0$, where $\abs{\cB^*}$ is given in Theorem~\ref{th:coveringrec}.
\end{theorem}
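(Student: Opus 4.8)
The plan is to read off three of the four assertions directly from Theorem~\ref{th:coveringrec}, which guarantees that $(X^*,\cB^*)$ is a $(t,k,qv)$ covering design: the code has $\abs{\cB^*}$ columns, every column stores $k$ symbols, and the alphabet is $[qv]$. For the repetition parameter, I would apply~\eqref{eq:rs} and~\eqref{eq:rsineq} to the lifted design with $s=1$, so that every symbol of $[qv]$ appears in at least $r_1=\binom{qv-1}{t-1}/\binom{k-1}{t-1}$ columns, matching the claim. For locality, I would first observe that $(X^*,\cB^*)$ is itself properly local: since $q\geq 1$ we have $qv\geq v$, hence $\ceil*{\frac{qv-t+1}{k-t+1}}\geq\ceil*{\frac{v-t+1}{k-t+1}}$, and the latter is at least $\ceil*{\frac{k}{t-1}}+1=q+1$ because $(X,\cB)$ satisfies~\eqref{eq:proploc}. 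Lemma~\ref{lem:locality} then yields locality $\ell=q=\ceil*{\frac{k}{t-1}}$, with $q,r$ as in~\eqref{eq:qr}.

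The whole difficulty is the claim $\cost(\cA)=0$, and the key is the layered structure of $X^*$. I would partition $X^*=[qv]$ into $q$ \emph{layers}, layer $i$ being $\angenv*{[v],i}=\set*{(i-1)v+1,\dots,iv}$. Every symbol of layer $i$ is strictly smaller than every symbol of layer $i+1$, so once a block is written as a column in ascending order its layer-$i$ symbols occupy a contiguous band of positions. Therefore reading \emph{all} symbols that a helper stores inside one fixed layer is a full-band read and costs zero skips. This motivates a uniform repair rule: given an erased column, split its $k$ symbols into groups, each group contained in a single layer, and serve each group from a distinct non-erased helper in which that group is exactly a full layer.

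To execute this I would tabulate, for each of the four block shapes, the multiset of layer sizes and check that the induced partition uses exactly $q$ groups. A $\cB_1$ block (and the pattern $I_1$ feeding $\cB_3$) has layer sizes $t-1$ repeated $q-1$ times together with one size $r$; the blocks $\cB_2$ and those from $I_3$ have sizes $t-1$ repeated $q-2$ times together with $m$ and $t-1+r-m$, all at most $t-1$ since $r+1\leq m\leq\floor*{\frac{t}{q}}$; and the pattern $I_2$ feeding $\cB_3$ has one empty layer, $q-2$ layers of size $t-1$, and one \emph{oversized} layer of size $t-1+r$, which I would break into a part of size $t-1$ and a part of size $r$. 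In each case $\sum_i\ceil*{s_i/(t-1)}=q$, so the number of groups equals the allotted locality. For a group given by a base set $G\subseteq[v]$ of size at most $t-1$ lying in layer $i$, I then produce a helper whose layer $i$ is exactly $\angenv*{G,i}$: by construction every $(t-1)$-subset and every $r$-subset of $[v]$ occurs as a full layer of a suitable $\cB_1$ block, and $G$ also occurs as a full layer of a $\cB_3$ block obtained by lifting any block of $\cB$ that contains $G$, placing $G$ alone in layer $i$. Reading $\angenv*{G,i}$ from such a helper is a full-band read, hence costs $0$.

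The main obstacle I anticipate is not contiguity but guaranteeing that the $q$ helpers are distinct and non-erased, which is exactly delicate for the oversized $I_2$-layer, where one layer must be served by two helpers while the global budget stays at $q$. Here I would argue as in the proof of Lemma~\ref{lem:locality}: by proper locality together with~\eqref{eq:rsineq} (and~\eqref{eq:rshierarchy} for the size-$r$ groups, since $r\leq t-1$), each required base set $G$ is contained in at least $r_{t-1}\geq q+1$ blocks of $\cB$, each of which lifts to a distinct block of $\cB^*$ having $\angenv*{G,i}$ as a full layer. Thus, after discarding the erased column and the at most $q-1$ previously chosen helpers, at least one admissible helper survives at each of the $q$ selection steps. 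Summing the zero contributions over the $\leq q$ helpers gives $\cost(\cA)=0$, completing the proof.
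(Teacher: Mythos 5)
Your proposal is correct and, at its core, it is the same argument as the paper's: both proofs rest on the observation that writing a column in ascending order makes each layer intersection contiguous, and on decomposing an erased block into at most $q$ single-layer groups (splitting the oversized layer of an $I_2$-type block into pieces of sizes $t-1$ and $r$), each group being served by a helper whose full intersection with that layer is exactly the needed set. Where you differ is the bookkeeping that makes the $q$ helpers distinct. The paper crosses types: an erased block of $\cB_1\cup\cB_2$ is repaired from lifted blocks of the covering design, and an erased block of $\cB_3\cup\cB_4$ is repaired from blocks of $\cB_1$, so distinctness is structural (the lifted helpers come from distinct blocks of $\cB$ containing the common base set $U$, and the $\cB_1$ helpers are built on pairwise disjoint base sets), and only $r_{t-1}\geq q$ is invoked. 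You instead use lifted helpers uniformly in all cases and enforce distinctness by a greedy discard in the style of Lemma~\ref{lem:locality}, which needs the slightly stronger bound $r_{t-1}\geq q+1$, also available from~\eqref{eq:proploc}. Both mechanisms are sound, and your explicit verification that $(X^*,\cB^*)$ inherits proper locality, so that Lemma~\ref{lem:locality} applies, is a detail the paper leaves implicit.

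One slip you should fix: a lift of a block of $\cB$ that places a middle-sized group $G$, of size $m$ or $t-1+r-m$ with $r+1\leq m\leq\floor*{\frac{t}{q}}$, alone in layer $i$ has layer-size multiset $\mset{m,t-1+r-m,t-1,\dots,t-1}$, hence an index vector in $I_3$; it is therefore a block of $\cB_4$, not of $\cB_3$, whose layer sizes can only be $0$, $r$, $t-1$, or $t-1+r$. This does not break your argument, because such groups arise only from erased blocks of $\cB_2\cup\cB_4$, i.e., only in the regime $1\leq r\leq\floor*{\frac{t}{q}}-1$ in which $\cB_4\subseteq\cB^*$ (alternatively, such a $G$ is a full layer of a suitable $\cB_2$ block); your later greedy step, which only asserts membership in $\cB^*$, is already stated correctly. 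But as written, ``a $\cB_3$ block'' is wrong for these sizes and should be replaced by ``a block of $\cB_3\cup\cB_4\subseteq\cB^*$''.
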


\begin{proof}
By Theorem~\ref{th:coveringrec}, $(X^*,\cB^*)$ is a covering design, which immediately (together with Lemma~\ref{lem:locality}) settles all of the claimed parameters, except for the zero skip cost. To prove the latter, assume a column of $\cA$ is erased, which contains the elements of a block $E\in\cB^*$. We show how to recover the erased column with zero skip cost. We distinguish between several cases:

\textbf{Case 1:} $E\in\cB_1$. W.l.o.g., assume 
\[
E=\bigcup_{j=1}^{q-1}\angenv*{U,j} \cup \angenv*{V,q},
\]
for some $U=\set{u_1,\dots,u_{t-1}}\in\binom{X}{t-1}$ and $V=\set{u_1,\dots,u_r}$. Since $(X,\cB)$ is a $(t,k,v)$ covering design, there exist at least $r_{t-1}$ distinct blocks containing $U$. Since the design is properly local, by~\eqref{eq:proploc}, $r_{t-1}\geq q$. We therefore denote by $B_1,\dots,B_q\in\cB$, a list of $q$ distinct blocks containing $U$. We now use these blocks to find blocks in $\cB^*$.

For all $i\in[q-1]$, denote the elements of $B_i$ as $B_i=\set{u_1,\dots,u_{t-1},y_{i,t},\dots,y_{i,k}}$. We consider the set
\[
B^*_i = \set*{\angenv*{u_1,i},\dots,\angenv*{u_{t-1},i},\angenv*{y_{i,t},j_{i,t}},\dots,\angenv*{y_{i,k},j_{i,k}}},
\]
such that $(i,\dots,i,j_{i,t},\dots,j_{i,k})\in I_1$. In particular, note that $j_{i,a}\neq i$ for all $a$. Thus, when the elements of $B^*_i$ are sorted, $\angenv*{u_1,i},\dots,\angenv*{u_{t-1},i}$ appear consecutively. Additionally, by construction, $B^*_i\in\cB_3$.

Similarly, for $i=q$, denote the elements of $B_q$ as $B_q=\set{u_1,\dots,u_{r},y_{q,r+1},\dots,y_{q,k}}$. We consider the set
\[
B^*_q = \set*{\angenv*{u_1,q},\dots,\angenv*{u_{r},q},\angenv*{y_{q,r+1},j_{q,r+1}},\dots,\angenv*{y_{q,k},j_{q,k}}},
\]
such that $(q,\dots,q,j_{q,r+1},\dots,j_{q,k})\in I_1$ and $j_{q,a}\neq q$ for all $a$. Again, when the elements of $B^*_q$ are sorted, $\angenv*{u_1,q},\dots,\angenv*{u_{r},q}$ appear consecutively, and $B^*_q\in\cB_3$.

The distinctness of $B_1,\dots,B_q$ implies the distinctness of $B^*_1,\dots,B^*_q$. We then recover $\angenv{U,i}$ from $B^*_i$, for all $i\in[q-1]$, and $\angenv{V,q}$ from $B^*_q$. This is done with zero skip cost.

\textbf{Case 2:} $E\in\cB_2$. W.l.o.g., assume 
\[
E=\bigcup_{j=1}^{q-2}\angenv*{U,j} \cup \angenv*{V,q-1}\cup \angenv*{W,q},
\]
for some $U=\set{u_1,\dots,u_{t-1}}\in\binom{X}{t-1}$ and $V\in\binom{U}{m}$, $W\in\binom{U}{t-1+r-m}$, and some $r+1\leq m\leq \floor{\frac{t}{q}}$. This case is handled just like Case 1, with the two smaller sets, $\angenv{V,q-1}$ and $\angenv{W,q}$, receiving the same treatment as $\angenv{V,q}$ in Case 1.

\textbf{Case 3:} $E=\set{\angenv{x_1,i_1},\dots,\angenv{x_k,i_k}}\in\cB_3$ and $\bi=(i_1,\dots,i_k)\in I_1$. Thus, we can write $E=\bigcup_{j=1}^{q}\angenv{U_i,a_i}$, with $U_i\in\binom{X}{t-1}$ for all $i\in[q-1]$, and $U_q\in\binom{X}{r}$. Additionally, the sets $U_i$ are pairwise disjoint.

For each $i\in[q-1]$, we pick an arbitrary $V_i\in\binom{U_i}{r}$, an arbitrary $b_i\in[q]\setminus\set{a_i}$, and consider the set
\[
B^*_i = \bigcup_{j\in[q]\setminus\set{b_i}}\angenv*{U_i,j}\cup \angenv*{V_i,b_j} \in \cB_1.
\]
For $i=q$, we pick an arbitrary $V\in\binom{X}{t-1}$ such that $U_q\subseteq V$, and consider the set
\[
B^*_q = \bigcup_{j\in[q]\setminus\set{a_q}}\angenv*{V,j}\cup \angenv*{U_q,a_q} \in \cB_1.
\]
The sets $B^*_i$, $i\in[q]$ are clearly distinct, since the sets $U_i$ are pairwise disjoint. For each $i\in[q]$, we further have that $\angenv{U_i,a_i}\subseteq B^*_i$, and when the elements of $B^*_i$ are sorted, the elements of $\angenv{U_i,a_i}$ appear contiguously. Therefore, $E$ may be recovered with zero skip cost.

\textbf{Case 4:} 
$E=\set{\angenv{x_1,i_1},\dots,\angenv{x_k,i_k}}\in\cB_3$ and $\bi=(i_1,\dots,i_k)\in I_2$. Thus, we can write $E=\bigcup_{j=1}^{q-1}\angenv{U_i,a_i}$, with $U_i\in\binom{X}{t-1}$ for all $i\in[q-2]$, and $U_{q-1}\in\binom{X}{t-1+r}$. Additionally, the sets $U_i$ are pairwise disjoint. We arbitrarily partition $U_{q-1}$ into two sets, $U'_{q-1},U'_{q}\subseteq U_{q-1}$, $U'_{q-1}\cup U'_q=U_{q-1}$, $U'_{q-1}\cap U'_q=\emptyset$, $\abs{U'_{q-1}}=t-1$, and $\abs{U'_q}=r$. We similarly define $a'_{q-1}=a'_q=a_{q-1}$. The remainder of the proof for this case is the same as Case 3, where instead of $U_1,\dots,U_q$ and $a_1,\dots,a_q$ we use $U_1,\dots,U_{q-2},U'_{q-1},U'_q$ and $a_1,\dots,a_{q-2},a'_{q-1},a'_q$.

\textbf{Case 5:} 
$E=\set{\angenv{x_1,i_1},\dots,\angenv{x_k,i_k}}\in\cB_4$. Thus, we can write $E=\bigcup_{j=1}^{q}\angenv{U_i,a_i}$, with $U_i\in\binom{X}{t-1}$ for all $i\in[q-2]$, $U_{q-1}\in\binom{X}{m}$, and $U_q\in\binom{X}{t-1+r-m}$, for some $r+1\leq m\leq\floor{\frac{t}{q}}$. Additionally, the sets $U_i$ are pairwise disjoint. The proof for this case follows the exact same logic as the proof of Case 3. The $q-2$ sets, $U_1,\dots,U_{q-2}$, receive the same treatment as their equivalent in Case 3. The two smaller sets, $U_{q-1}$ and $U_q$, receive the same treatment as the small set $U_q$ in Case 3.
\end{proof}

\begin{example}
Let $(X,\cB)$ be a $(3,4,5)$ covering design with four blocks:
\[
\cB=\set*{
\set*{1,2,3,4},
\set*{1,2,3,5},
\set*{1,2,4,5},
\set*{1,3,4,5}
}.
\]
We apply Construction~\ref{con:recursive} to this design. In this case, $q=2$ and $r=2$. During the construction we obtain the following blocks:
\begin{itemize}
\item
The set $\cB_1$ contains the following $10$ blocks:
\[
\begin{array}{lllll}
\set{1,2,6,7}, & \set{1,3,6,8}, & \set{1,4,6,9}, & \set{1,5,6,10}, & \set{2,3,7,8},\\
\set{2,4,7,9}, & \set{2,5,7,10}, & \set{3,4,8,9}, & \set{3,5,8,10}, & \set{4,5,9,10}.
\end{array}
\]
\item
The set $\cB_3$ contains the following $24$ blocks resulting from index vectors from $I_1$:
\[
\begin{array}{cllll}
\toprule
\text{index} & \text{blocks} \\
\midrule
(1,1,2,2) & \set{1,2,8,9}, &\set{1,2,8,10}, &\set{1,2,9,10}, & \set{1,3,9,10},\\
(1,2,1,2) & \set{1,3,7,9}, &\set{1,3,7,10}, &\set{1,4,7,10}, &\set{1,4,8,10},\\
(1,2,2,1) & \set{1,4,7,8}, &\set{1,5,7,8}, &\set{1,5,7,9}, &\set{1,5,8,9},\\
(2,1,1,2) & \set{2,3,6,9}, &\set{2,3,6,10}, &\set{2,4,6,10}, &\set{3,4,6,10},\\
(2,1,2,1) & \set{2,4,6,8}, &\set{2,5,6,8}, &\set{2,5,6,9}, &\set{3,5,6,9},\\
(2,2,1,1) &\set{3,4,6,7}, &\set{3,5,6,7}, &\set{4,5,6,7}, &\set{4,5,6,8}.\\
\bottomrule
\end{array}
\]
\item
The set $\cB_3$ contains the following $8$ blocks resulting from index vectors from $I_2$:
\[
\begin{array}{cllll}
\toprule
\text{index} & \text{blocks} \\
\midrule
(1,1,1,1) & \set{1,2,3,4}, &\set{1,2,3,5}, &\set{1,2,4,5}, & \set{1,3,4,5},\\
(2,2,2,2) & \set{6,7,8,9}, &\set{6,7,8,10}, &\set{6,7,9,10}, &\set{6,8,9,10}.\\
\bottomrule
\end{array}
\]
\end{itemize}

All the blocks together form a $(3,4,10)$ covering design with $42$ blocks. Writing the blocks as the columns of a $4\times 42$ matrix we obtain
\[
\arraycolsep=0.5pt
\cA=\parenv*{
\begin{array}{cccccccccccccccccccccccccccccccccccccccccc}
1&1&1&1&2&2&2&3&3&4&1&1&1&1&1&1&1&1&1&1&1&1&2&2&2&3&2&2&2&3&3&3&4&4&1&1&1&1&6&6&6&6\\
2&3&4&5&3&4&5&4&5&5&2&2&2&3&3&3&4&4&4&5&5&5&3&3&4&4&4&5&5&5&4&5&5&5&2&2&2&3&7&7&7&8\\
6&6&6&6&7&7&7&8&8&9&8&8&9&9&7&7&7&8&7&7&7&8&6&6&6&6&6&6&6&6&6&6&6&6&3&3&4&4&8&8&9&9\\
7&8&9&0&8&9&0&9&0&0&9&0&0&0&9&0&0&0&8&8&9&9&9&0&0&0&8&8&9&9&7&7&7&8&4&5&5&5&9&0&0&0
\end{array}
},
\]
(where $10$ is written as $0$ to maintain a condensed presentation). The matrix $\cA$ is a $(42,4,12)_{10}$-CFR code with zero skip cost.
\end{example}

\begin{corollary}
\label{cor:recursive}
For any fixed $2\leq t \leq k$, there exists an infinite sequence of integers, $k\leq v_1<v_2<\cdots$, such that there are CFR codes, $\cA_i$, with parameters $(N_i,k,\binom{qv_i-1}{t-1}/\binom{k-1}{t-1})_{qv_i}$, locality $\ell=\ceil{k/(t-1)}$, with $\cost(\cA_i)=0$ and
\[
\lim_{i\to\infty}\xi(\cA_i)=\frac{1}{q^t}\cdot
\begin{cases}
c_1(t,k), &r = t-1,\\
c_2(t,k), &\floor*{\frac{t}{q}} \leq r \leq t-2,\\
c_3(t,k), &1 \leq r \leq \floor*{\frac{t}{q}} -1,\\
\end{cases}
\]
where $q$ and $r$ are defined in~\eqref{eq:qr}, and $c_1(t,k)$, $c_2(t,k)$ and $c_3(t,k)$ are defined in~\eqref{eq:c123}.
\end{corollary}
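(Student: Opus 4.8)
The plan is to invoke R\"odl's existence result (Lemma~\ref{lem: Rodl}) to obtain a sequence of asymptotically optimal covering designs, feed each into Construction~\ref{con:recursive}, and then read off the asymptotics of the expansion factor directly from the block-count formula of Theorem~\ref{th:coveringrec}. First I would fix $2\leq t\leq k$ and note that $q=\ceil{k/(t-1)}$ and $r=k\umod(t-1)$ depend only on $t$ and $k$, so the three-way case split in~\eqref{eq:bstar} is fixed throughout the sequence. By Lemma~\ref{lem: Rodl} there is a sequence of $(t,k,v_i)$ covering designs $(X_i,\cB_i)$ with $v_i\to\infty$ and
\[
\abs*{\cB_i}=\frac{\binom{v_i}{t}}{\binom{k}{t}}(1+o(1)).
\]
Since $v_i\to\infty$, after discarding finitely many terms I may assume $v_i\geq k^2/(t-1)$, so each $(X_i,\cB_i)$ is properly local (as remarked after Lemma~\ref{lem:locality}), which is exactly the hypothesis required by Construction~\ref{con:recursive}.

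Next, applying Construction~\ref{con:recursive} to each $(X_i,\cB_i)$ produces a $(t,k,qv_i)$ covering design $(X_i^*,\cB_i^*)$ by Theorem~\ref{th:coveringrec}, and hence a CFR code $\cA_i$ which, by Theorem~\ref{th:cfrrec}, has parameters $(\abs{\cB_i^*},k,\binom{qv_i-1}{t-1}/\binom{k-1}{t-1})_{qv_i}$, locality $\ell=\ceil{k/(t-1)}$, and $\cost(\cA_i)=0$. Setting $N_i\eqdef\abs{\cB_i^*}$, this already settles every claimed property except the value of the limiting expansion factor, so the remaining task is purely an asymptotic computation.

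For that computation I use the definition of the expansion factor with point set $X_i^*=[qv_i]$, giving
\[
\xi(\cA_i)=\frac{\abs*{\cB_i^*}}{\binom{qv_i}{t}/\binom{k}{t}},
\]
and then substitute the formula for $\abs{\cB_i^*}$ from~\eqref{eq:bstar}. The key observation I would make explicit is that in each of the three ranges the leading term is $c_j(t,k)\abs{\cB_i}=\Theta(v_i^t)$, whereas every additive correction term is a constant multiple of $\binom{v_i}{t-1}=\Theta(v_i^{t-1})$, hence $o(\abs{\cB_i})$. Therefore $\abs{\cB_i^*}=c_j(t,k)\abs{\cB_i}(1+o(1))$, and combining this with the size of $\abs{\cB_i}$ yields
\[
\xi(\cA_i)=c_j(t,k)\cdot\frac{\binom{v_i}{t}}{\binom{qv_i}{t}}(1+o(1)).
\]
Since $\binom{v}{t}\sim v^t/t!$ and $\binom{qv}{t}\sim (qv)^t/t!$ as $v\to\infty$, the ratio $\binom{v_i}{t}/\binom{qv_i}{t}\to q^{-t}$, giving $\lim_{i\to\infty}\xi(\cA_i)=c_j(t,k)/q^t$, where $j\in\set{1,2,3}$ is selected by the range of $r$ exactly as in~\eqref{eq:bstar} and~\eqref{eq:c123}.

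The argument is essentially bookkeeping layered on top of Theorems~\ref{th:coveringrec} and~\ref{th:cfrrec}, so I do not expect a genuine obstacle. The only delicate point is confirming that all correction terms are of strictly lower order in each case; in particular, in the range $1\leq r\leq\floor{t/q}-1$ the summation term is a sum of $\floor{t/q}-r$ many fixed multiples of $\binom{v_i}{t-1}$, and since the number of summands and all the binomial coefficients in $t$ and $k$ are constants, this term is still $\Theta(v_i^{t-1})$ and therefore negligible against $c_3(t,k)\abs{\cB_i}$.
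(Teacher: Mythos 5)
Your proposal is correct and follows essentially the same route as the paper: invoke Lemma~\ref{lem: Rodl}, feed the resulting designs into Construction~\ref{con:recursive}, read off all parameters and the zero skip cost from Theorems~\ref{th:coveringrec} and~\ref{th:cfrrec}, and compute the limit by factoring $\xi(\cA_i)$ into the ratio $\binom{v_i}{t}/\binom{qv_i}{t}\to q^{-t}$ times $\abs{\cB^*_i}/\abs{\cB_i}\to c_j(t,k)$, noting the $\binom{v_i}{t-1}$-type correction terms are of lower order. Your extra step of discarding finitely many terms to guarantee proper locality (needed as a hypothesis of Construction~\ref{con:recursive}) is a detail the paper's proof leaves implicit, but it does not change the argument.
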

\begin{proof}
By Lemma~\ref{lem: Rodl}, there exists a sequence of $(t,k,v_i)$ covering designs, $(X_i,\cB_i)$, with
\[
\abs*{\cB} = \frac{\binom{v_i}{t}}{\binom{k}{t}} (1+o(1)).
\]
We apply Construction~\ref{con:recursive}, which by Theorem~\ref{th:cfrrec} gives us a sequence of $(\abs{\cB^*_i},k,\binom{qv_i-1}{t-1}/\binom{k-1}{t-1})$-CFR codes, $\cA_i$, and where $\abs{\cB^*}$ is given by~\eqref{eq:bstar} and~\eqref{eq:c123}. We therefore have
\begin{align*}
\lim_{i\to\infty}\xi(\cA_i) &= \lim_{i\to\infty}\frac{\abs*{\cB^*_i}}{\binom{qv_i}{t}/\binom{k}{t}} = \lim_{i\to\infty}\frac{\abs*{\cB_i}}{\binom{qv_i}{t}/\binom{k}{t}}\cdot\frac{\abs*{\cB^*_i}}{\abs*{\cB_i}},
\end{align*}
which proves our claim since the first fraction tends to $\frac{1}{q^t}$, and by~\eqref{eq:bstar} and~\eqref{eq:c123}, the second tends to $c_j(t,k)$, $j\in\set{1,2,3}$, where $j$ is determined by the parameters ranges.
\end{proof}

Some examples of the asymptotic expansion factor resulting from Construction~\ref{con:recursive} are provided in Table~\ref{tab:asympt}. We note that generally, the asymptotic expansion factor guaranteed by Construction~\ref{con:recursive} is worse (i.e., higher) than that of Construction~\ref{con:combination}. However, in the non-asymptotic regime, Construction~\ref{con:recursive} may significantly outperform Construction~\ref{con:combination}. As example, we consider CFR codes from $(5,6,v)$ covering designs for even values of $12\leq v\leq 26$. For the constructions we use the smallest known covering designs parameters given in~\cite{dmgordon_coverings}, summarized in Table~\ref{tab:known}. The resulting expansion factors from Constructions~\ref{con:multi},~\ref{con:combination}, and~\ref{con:recursive}, are summarized in Table~\ref{tab:comp}.

\begin{table}
\centering
\caption{The asymptotic expansion factor of CFR codes resulting from Construction~\ref{con:recursive} and Corollary~\ref{cor:recursive}}
\label{tab:asympt}
\begin{tabular}{c|cccccccccc}
\toprule
$(t,k)$ & $(3,4)$ & $(4,5)$ & $(4,6)$ & $(5,6)$ & $(5,7)$ & $(5,8)$ & $(6,7)$ & $(6,8)$  & $(6,9)$  & $(6,10)$  \\
\hline
$\xi$ & $1$ & $\frac{11}{8}$  & $\frac{11}{8}$  & $1$ & $\frac{9}{4}$ & $\frac{9}{4}$  & $\frac{57}{32}$  & $\frac{57}{32}$  & $\frac{127}{32}$  & $\frac{127}{32}$ \\
\bottomrule
\end{tabular}
\end{table}

\begin{table}
\centering
\caption{Some best known sizes of covering designs from \cite{dmgordon_coverings}}
\label{tab:known}
\begin{tabular}{l|llllllll}
\toprule
\diagbox{$(t,k)$}{$v$} & $6$ & $7$ & $8$ & $9$ & $10$ & $11$ & $12$ & $13$ \\ 
\hline
$(5,6)$          & 1      & 6      & 12     & 30     & 50     & 100     & 132    & 245    \\ 
\bottomrule
\end{tabular}
\begin{tabular}{l|llllllll}
\toprule
\diagbox{$(t,k)$}{$v$} & $12$ & $14$ & $16$ & $18$ & $20$ & $22$ & $24$ & $26$ \\ 
\hline
$(5,6)$            & 132    & 371    & 808    & 1530   & 2800   & 4659   & 7084   & 11544  \\
\hline
$(4,6)$            & 40     & 80     & 152    & 236    & 382    & 580    & 784    & 1152   \\
\bottomrule
\end{tabular}
\end{table}

\begin{table}
\centering
\caption{The expansion factor of CFR codes from $(5,6,v)$ covering designs using Constructions~\ref{con:multi},~\ref{con:combination}, and~\ref{con:recursive} with designs from Table~\ref{tab:known} }
\label{tab:comp}
\begin{tabular}{c|cccccccccc}
\toprule
$v$ & $12$ & $14$ & $16$ & $18$ & $20$ & $22$ & $24$ & $26$ \\
\hline
Construction~\ref{con:multi} & $2.00$ & $2.22$  & $2.22$  & $2.14$ & $2.17$ & $2.12$  & $2.00$  & $2.11$ \\
\hline
Construction~\ref{con:combination} & $10.32$  & $8.30$ & $7.37$ & $6.03$  & $5.52$  & $5.03$ & $4.32$ & $4.21$  \\
\hline
Construction~\ref{con:recursive} & $1.61$ & $1.83$  & $1.68$  & $1.73$ & $1.59$ & $1.63$  & $1.43$  & $1.50$  \\
\bottomrule
\end{tabular}
\end{table}

We conclude this section by showing that, for all sufficiently large $v$, any $(t,k,v)$ covering design may be directly placed as the columns of a CFR code with zero skip cost. This will create a CFR code with the optimal expansion factor. However, the claim is proved using the probabilistic method, and is therefore non-constructive.

\begin{theorem}
\label{th:prob}
Fix integers $2\leq t\leq k$. Then for all sufficiently large $v$, any $(t,k,v)$ covering design, $(X,\cB)$, can create a CFR code $\cA$, whose columns are the blocks of $\cB$, with locality $\ell=\ceil{k/(t-1)}$, such that $\cost(\cA)=0$.
\end{theorem}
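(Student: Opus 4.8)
The plan is to keep the given covering design \emph{fixed} and randomize only the internal orderings of the columns, then apply the probabilistic method. First I would reduce the zero-skip-cost requirement to a purely combinatorial contiguity condition. Following the partition used in Lemma~\ref{lem:locality}, an erased block $B$ of size $k$ splits into $\ell=\ceil*{\frac{k}{t-1}}$ parts $P_1,\dots,P_\ell$, each of size $t-1$ except possibly the last of size $r=k\umod(t-1)$. If every part $P_i$ appears \emph{contiguously} in some non-erased column, and these columns can be chosen distinct, then reading each part off its helper incurs zero skips, so the column holding $B$ has $\cost(\ba^{(j)})=0$. It therefore suffices to guarantee that each $(t-1)$-subset and each $r$-subset of points occurs contiguously in at least $\ell+1$ blocks: after discarding $B$ itself, each part still has at least $\ell$ candidate helpers, and since every candidate set then has size $\geq\ell$, Hall's condition holds trivially (equivalently, a greedy assignment succeeds), yielding $\ell$ distinct helpers.

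Next I would bound the probability that a fixed small subset violates this condition. Order each block independently and uniformly at random. For a fixed $(t-1)$-subset $P$ and a fixed block $B'\supseteq P$ (necessarily of size $k$), the probability that the elements of $P$ occupy $t-1$ consecutive positions is the positive constant $p_{t-1}=(k-t+2)(t-1)!(k-t+1)!/k!$, depending only on $t,k$; an analogous constant $p_r$ handles the $r$-subsets (with $p_1=1$ trivially). By~\eqref{eq:rsineq} and~\eqref{eq:rshierarchy}, $P$ lies in at least $r_{t-1}=\Theta(v)$ blocks, whose orderings are independent, so the number $X_P$ of blocks in which $P$ is contiguous is a sum of independent $\mathrm{Bernoulli}(p_{t-1})$ variables with $\E[X_P]\geq p_{t-1}r_{t-1}\to\infty$. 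A multiplicative Chernoff bound then gives $\Pr[X_P\leq\ell]\leq\exp(-\Theta(r_{t-1}))=\exp(-\Theta(v))$ once $v$ is large enough that $\E[X_P]$ exceeds twice the constant $\ell$.

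Finally I would close with a union bound. There are only $\binom{v}{t-1}+\binom{v}{r}=O(v^{t-1})$ subsets to control, while each is bad with probability $\exp(-\Theta(v))$, so the probability that some subset fails is at most $O(v^{t-1})\exp(-\Theta(v))$, which is below $1$ for all sufficiently large $v$. Hence an ordering satisfying the contiguity condition exists, and for such an ordering $\cost(\cA)=0$. The stated locality $\ell=\ceil*{\frac{k}{t-1}}$ is correct because for large $v$ every $(t,k,v)$ covering design is properly local, so Lemma~\ref{lem:locality} applies, and the recovery we exhibit uses exactly $\ell$ helpers.

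I expect the main obstacle to be the quantitative balance in the last two steps: one must confirm that the per-subset failure probability decays \emph{exponentially} in $v$ — which rests on the linear growth $r_{t-1}=\Theta(v)$ together with the constancy of $p_{t-1}$ and $\ell$ — so that it dominates the merely \emph{polynomial} number $O(v^{t-1})$ of subsets. The reduction to a system of distinct representatives is clean, but care is needed to set the contiguity threshold at $\ell+1$ (rather than $\ell$) so that removing the erased block still leaves enough helpers, and to emphasize that the randomness lies in the orderings alone while the design itself is arbitrary.
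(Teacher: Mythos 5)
Your proposal is correct, and it uses the same basic template as the paper's proof --- keep the design fixed, order each column by an independent uniform permutation, note that a prescribed part of a block lands contiguously in a containing column with a constant probability such as $(t-1)!(k-t+2)!/k!$, and apply the probabilistic method --- but the two arguments are organized around genuinely different families of bad events. The paper union-bounds over the \emph{columns}: for each block $B_i$ and each of its $\ell$ parts it pre-selects pairwise disjoint families $\cH_{i,1},\dots,\cH_{i,\ell}$ of roughly $(r_{t-1}-1)/\ell$ candidate blocks each; the disjointness makes the per-part failure events independent and, more importantly, makes the chosen helpers automatically distinct, so an elementary geometric bound of the form $(1-p)^{\Theta(v)}$ per part suffices, followed by a union bound over all $N$ columns. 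This is also why the paper must first discard duplicate blocks, to ensure $N\leq \binom{v}{k}\leq v^k$ grows only polynomially in $v$. You instead union-bound over the $O(v^{t-1})$ \emph{small subsets}, use a Chernoff bound to ensure each $(t-1)$-subset and each $r$-subset is contiguous in at least $\ell+1$ columns, and then recover distinctness of helpers a posteriori by Hall's condition (or greedy), with the threshold $\ell+1$ correctly accounting for the possible presence of the erased block among the candidates. What your route buys: the union bound never involves $N$, so no de-duplication step or bound on $\abs{\cB}$ is needed (repeated blocks only add independent chances), and you in fact prove the stronger statement that with high probability \emph{every} small subset is contiguous in many columns simultaneously. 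What the paper's route buys: it needs no concentration inequality at all --- the simple bound $(1-p)^m$ replaces Chernoff --- and the disjoint-family device makes the system-of-distinct-representatives step unnecessary. Both arguments yield locality $\ell=\ceil{k/(t-1)}$ (proper locality holds for all large $v$) and $\cost(\cA)=0$ for all sufficiently large $v$.
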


\begin{proof}
Let us define
\begin{align*}
q &\eqdef \ceil*{\frac{k}{t-1}}, & r &\eqdef k \umod (t-1), & N&\eqdef \abs{\cB}.
\end{align*}
Observe that for sufficiently large $v$ the covering design is properly local. We further note that we can assume $\cB$ is a set and not a multiset, i.e., there are no repeated blocks. Indeed, if a column with a duplicate is erased, we can recover it by reading the duplicate. Thus, for the sake of the analysis to follow we remove duplicates of columns. In particular, this implies we may assume
\begin{equation}
\label{eq:Nvk}
N=\abs{\cB} \leq \binom{v}{k} \leq v^k.
\end{equation}

We place the $N$ blocks of $\cB$ in a $k\times N$ array, $\cA$ randomly: for each block $B_i=\set{x_{i,1},\dots,x_{i,k}}$, independently of other blocks, we randomly and uniformly choose a permutation $\sigma\in S_k$ (the set of permutations over $[k]$), and place the elements in the column of $\cA$ in the order $x_{i,\sigma(1)},\dots,x_{i,\sigma(k)}$. By showing that, with positive probability, all columns of $\cA$ may be recovered with zero skip cost, we will deduce the claim.

Consider column $i$ of $\cA$, corresponding to the block $B_i=\set{x_{i,1},\dots,x_{i,k}}\in\cB$. We further partition $B_i$ into $q$ sets, each of size $t-1$, except perhaps the last one of size $r$:
\[
B_{i,1}=\set{x_{i,1},\dots,x_{i,t-1}}, B_{i,2}=\set{x_{i,t},\dots,x_{i,2(t-1)}},\dots,B_{i,q}=\set{x_{i,(q-1)(t-1)},\dots,x_{i,k}}.
\]
For each $j\in[q]$, there exist at least $r_{t-1}=\frac{v-t+1}{k-t+1}$ blocks of $\cB$ that contain $B_{i,j}$. Thus, we can find $q$ pair-wise disjoint sets of blocks, $\cH_{i,1},\dots,\cH_{i,q}\subseteq \cB\setminus\set{B_i}$, such that for all $j\in[q]$, and all $H\in\cH_{i,j}$, we have $B_{i,j}\subseteq H$, and 
\[
\abs*{\cH_{i,j}}=\floor*{\frac{r_{t-1}-1}{q}}=\floor*{\frac{v-k}{q(k-t+1)}}.
\]
For all sufficiently large $v$ we therefore have $\abs{\cH_{i,j}}\geq 1$.

Consider a set $B_{i,j}$ and a column made of the elements of $H$, such that $B_{i,j}\subseteq H$. For now, assume $\abs{B_{i,j}}=t-1$. Out of the $k!$ possible arrangements of the elements of $H$ in the column, exactly $(k-t+2)(t-1)!(k-t+1)!=(t-1)!(k-t+2)!$ of them contain the elements of $B_{i,j}$ in a contiguous block, thus allowing their retrieval with zero skip cost.

Let us denote by $E_{i,j}$ the event that $B_{i,j}$ may be recovered with zero skip cost. Then
\begin{equation}
\label{eq:bijtemp}
\Pr\sparenv*{E_{i,j}} \geq 1-\parenv*{1-\frac{(t-1)!(k-t+2)!}{k!}}^{\floor{(v-k)/q(k-t+1)}},
\end{equation}
since we have at least $\floor{r_{t-1}/q}$ columns in $\cH_{i,j}$ containing the elements of $B_{i,j}$. Returning to our assumption, if $1\leq B_{i,j} < t-1$, then it is easily checked that~\eqref{eq:bijtemp} holds as well. We now conveniently denote
\[
\alpha_{t,k} \eqdef 1-\frac{(t-1)!(k-t+2)!}{k!}
\]
and note that $0<\alpha_{t,k}<1$ is a constant (depending on $t$ and $k$). Then
\[
\beta_{t,k}\eqdef \alpha_{t,k}^{\frac{1}{2k^2}}
\]
also is a constant that satisfies $0<\beta_{t,k}<1$. We can now develop~\eqref{eq:bijtemp} to get
\begin{equation}
\label{eq:bij}
\Pr[E_{i,j}] \geq 1-\parenv*{1-\frac{(t-1)!(k-t+2)!}{k!}}^{\floor{(v-k)/q(k-t+1)}} \geq 1-\beta_{t,k}^v
\end{equation}
since for all large enough $v$ we have $\floor{(v-k)/q(k-t+1)}\geq \frac{v}{2k^2}$.

Let $E_i$ be the event that column $i$ of $\cA$, corresponding to block $B_i$, may be recovered with zero skip cost. Using~\eqref{eq:bij}, we have
\begin{align}
\Pr[E_i] &\overset{(a)}{\geq} \parenv*{1-\beta_{t,k}^v}^q 
\overset{(b)}{\geq} 1-q\beta_{t,k}^v,
\label{eq:prei}
\end{align}
where $(a)$ holds since the sets $\cH_{i,j}$ are pair-wise disjoint, and so their columns are statistically independent, and $(b)$ follows from Bernoulli's inequality $(1-x)^a \geq 1-ax$ for any real $x\in[0,1]$ and $a\geq 1$.

We now have
\begin{align*}
\Pr[\cost(\cA)=0] &= \Pr\sparenv*{\bigcap_{i\in[N]}E_i} = 1-\Pr\sparenv*{\bigcup_{i\in[N]}\overline{E_i}} \overset{(a)}{\geq} 1-\sum_{i\in[N]}\Pr\sparenv*{\,\overline{E_i}\,}\\
& \overset{(b)}{\geq} 1-Nq\beta_{t,k}^v 
\overset{(c)}{\geq} 1-v^k q\beta_{t,k}^v,
\end{align*}
where $(a)$ follows from the union bound, $(b)$ follows from~\eqref{eq:prei}, and $(c)$ follows from~\eqref{eq:Nvk}. We now crucially note that $q$, $t$, $k$ and $\beta_{t,k}$ are constants. Thus, $v^k q$ grows polynomially in $v$, whereas $\beta_{t,k}^v$ vanishes exponentially. Thus, for all sufficiently large $v$ we have
\[
\Pr[\cost(\cA)=0]>0,
\]
and a CFR code with zero skip cost exists as claimed.
\end{proof}

\section{Conclusion}
\label{sec:conc}

In this work we studied constructions of fractional-repetition codes based on covering designs (CFR codes). We additionally required the codes to have zero skip cost when recovering from a column erasure. We showed this is always attainable, though at a price: we may need to increase the number of columns with respect to the optimal number determined by Steiner systems. We called this price the expansion factor.

We showed how zero skip cost CFR codes with asymptotic expansion factor $2$ and $1$ are constructed (Constructions~\ref{con:multi} and~\ref{con:combination}). However, until reaching this asymptote, another construction we presented obtained a lower expansion factor (Construction~\ref{con:recursive}). We also showed any covering design may be arranged in a CFR code to get a zero skip cost, albeit, the proof was non-constructive (Theorem~\ref{th:prob}). Some open questions remain, with the most important one being: Is it possible to efficiently find the permutation guaranteed by Theorem~\ref{th:prob}? If this is hard to find, can we find a better construction than Construction~\ref{con:recursive} in the non-asymptotic regime?

\bmhead*{Acknowledgments}
The work of Bo-Jun Yuan was supported by National Natural Science Foundation of China under Grant 12201559.
\bibliography{reference}

\end{document}